\def\@seccntformat#1{%
	\protect\textup{\protect\@secnumfont
		\ifnum\pdfstrcmp{subsection}{#1}=0 \bfseries\fi
		\ifnum\pdfstrcmp{subsubsection}{#1}=0 \itshape\fi
		\csname the#1\endcsname
		\protect\@secnumpunct
	}%
}
\renewcommand{\@upn}{}
\DeclareRobustCommand{\crefnosort}[1]{%
	\begingroup\@cref@sortfalse\cref{#1}\endgroup
}
\numberwithin{equation}{section}
\newtheorem{thm}{Theorem}[section]
\newtheorem{lem}[thm]{Lemma}
\newtheorem{prop}[thm]{Proposition}
\newtheorem{cor}[thm]{Corollary}
\theoremstyle{definition}
\renewcommand*{\thehyp}{\Alph{hyp}}
\theoremstyle{remark}
\newtheorem{rem}[thm]{Remark}
\crefname{hyp}{Hypothesis}{Hypotheses}
\Crefname{hyp}{Hypothesis}{Hypotheses}
\crefname{lem}{Lemma}{Lemmas}
\Crefname{lem}{Lemma}{Lemmas}
\crefname{thm}{Theorem}{Theorems}
\Crefname{thm}{Theorem}{Theorems}
\crefname{prop}{Proposition}{Propositions}
\Crefname{prop}{Proposition}{Propositions}
\crefname{enumi}{}{}
\Crefname{enumi}{}{}
\crefname{equation}{}{}
\Crefname{equation}{}{}
\crefname{rem}{Remark}{Remarks}
\Crefname{rem}{Remark}{Remarks}
\renewcommand{\@upn}{} 
\patchcmd{\endthm}{\@endpefalse}{}{}{}
\patchcmd{\endcor}{\@endpefalse}{}{}{}
\patchcmd{\endlem}{\@endpefalse}{}{}{}
\patchcmd{\endprop}{\@endpefalse}{}{}{}
\patchcmd{\endproof}{\@endpefalse}{}{}{}
\newlist{enumthm}{enumerate}{1} 
\setlist[enumthm]{label=\upshape(\roman*),ref=\thethm~(\roman*)}  
\newlist{enumcor}{enumerate}{1}
\setlist[enumcor]{label=\upshape(\roman*),ref=\thecor~(\roman*)}
\newlist{enumlem}{enumerate}{1}
\setlist[enumlem]{label=\upshape(\roman*),ref=\thelem~(\roman*)}
\newlist{enumprop}{enumerate}{1}
\setlist[enumprop]{label=\upshape(\roman*),ref=\theprop~(\roman*)}
\newlist{enumhyp}{enumerate}{1}
\setlist[enumhyp]{label=\upshape(\roman*),ref=\thehyp~(\roman*)}
\newlist{enumproof}{enumerate*}{1}
\setlist[enumproof]{label=\upshape(\roman*)}
\newlist{enumdef}{enumerate}{1}
\setlist[enumdef]{label=\upshape(\roman*),ref=\thedefn~(\roman*)}
\newcounter{subcreftmpcnt} %
\newcommand\romansubformat[1]{(\roman{#1})} 
\def\subcref{\@ifstar\@@subcref\@subcref}
\newcommand\@subcref[2][\romansubformat]{%
	\ifcsname r@#2@cref\endcsname
	\cref@getcounter {#2}{\mylabel}%
	\setcounter{subcreftmpcnt}{\mylabel}%
	\hyperref[#2]{\romansubformat{subcreftmpcnt}}%
	\else ?? \fi}   
\newcommand\@@subcref[2][\romansubformat]{%
	\ifcsname r@#2@cref\endcsname
	\cref@getcounter {#2}{\mylabel}%
	\setcounter{subcreftmpcnt}{\mylabel}%
	\romansubformat{subcreftmpcnt}%
	\else ?? \fi}   
\DeclareRobustCommand{\crefnosort}[1]{%
	\begingroup\@cref@sortfalse\cref{#1}\endgroup
}
\def\endstepsymbol{$\lozenge$}
\def\endclaimsymbol{$\lozenge$}
\newcounter{proofstep}
\crefname{proofstep}{Step}{Steps}
\Crefname{proofstep}{Step}{Steps}
\newcounter{proofclaim}
\crefname{proofclaim}{Claim}{Claims}
\Crefname{proofclaim}{Claim}{Claims}
\newcommand{\cC}{{\mathcal C}}
\newcommand{\cD}{{\mathcal D}}\newcommand{\cF}{{\mathcal F}}
\newcommand{\BN}{{\mathbb N}}
\newcommand{\BR}{{\mathbb R}}
\newcommand{\sG}{{\mathscr G}}
\newcommand{\sfP}{{\mathsf P}}
\newcommand{\sfc}{{\mathsf c}}
\newcommand{\sfd}{{\mathsf d}}\newcommand{\sfe}{{\mathsf e}}
\newcommand{\sfg}{{\mathsf g}}
\newcommand{\sfs}{{\mathsf s}}
\newcommand{\IN}{\BN}\newcommand{\IR}{\BR}
\newcommand{\R}{\BR}
\newcommand{\eps}{\varepsilon}\newcommand{\ph}{\varphi}
 \renewcommand{\d}{\sfd}
\DeclareMathOperator*{\essinf}{ess\,inf}
\newcommand{\wh}[1]{\widehat{#1}}\newcommand{\wt}[1]{\widetilde{#1}}
\DeclareFontFamily{U}{mathx}{\hyphenchar\font45}
\DeclareFontShape{U}{mathx}{m}{n}{
	<5> <6> <7> <8> <9> <10>
	<10.95> <12> <14.4> <17.28> <20.74> <24.88>
	mathx10
}{}
\DeclareSymbolFont{mathx}{U}{mathx}{m}{n}
\DeclareMathAccent{\widecheck}{0}{mathx}{"71}
\DeclareMathAccent{\wideparen}{0}{mathx}{"75}
\DeclareFontFamily{OMX}{MnSymbolE}{}
\DeclareFontShape{OMX}{MnSymbolE}{m}{n}{
	<-6>  MnSymbolE5
	<6-7>  MnSymbolE6
	<7-8>  MnSymbolE7
	<8-9>  MnSymbolE8
	<9-10> MnSymbolE9
	<10-12> MnSymbolE10
	<12->   MnSymbolE12}{}
\DeclareSymbolFont{mnlargesymbols}{OMX}{MnSymbolE}{m}{n}
\DeclareMathDelimiter{\llangle}{\mathopen}{mnlargesymbols}{'164}{mnlargesymbols}{'164}
\DeclareMathDelimiter{\rrangle}{\mathclose}{mnlargesymbols}{'171}{mnlargesymbols}{'171}
\DeclareMathDelimiter{\lsem}{\mathopen}{mnlargesymbols}{'102}{mnlargesymbols}{'102}
\DeclareMathDelimiter{\rsem}{\mathclose}{mnlargesymbols}{'107}{mnlargesymbols}{'107}
\DeclareMathDelimiter{\langlebar}{\mathopen}{mnlargesymbols}{'152}{mnlargesymbols}{'152}
\DeclareMathDelimiter{\ranglebar}{\mathclose}{mnlargesymbols}{'157}{mnlargesymbols}{'157}
\DeclareMathDelimiter{\lWavy}{\mathopen}{mnlargesymbols}{'137}{mnlargesymbols}{'137}
\DeclareMathDelimiter{\rWavy}{\mathopen}{mnlargesymbols}{'137}{mnlargesymbols}{'137}
\newcommand{\chr}{\mathbf 1}
\newcommand{\abs}[1]{\lvert#1\lvert}
\newcommand{\norm}[1]{\lVert#1\lVert}
\newcommand{\FGamma}{\Gamma}
\newcommand{\FS}{\cF}\newcommand{\dG}{\sfd\FGamma}
\title[Lower Bound on the Critical Momentum of an Impurity in a BEC]{A Lower Bound on the Critical Momentum\\ of an Impurity in a Bose--Einstein Condensate}
\author{Benjamin Hinrichs}
\address{Benjamin Hinrichs, Universit\"at Paderborn, Institut f\"ur Mathematik, Institut f\"ur Photonische Quantensysteme, Warburger Str. 100, 33098 Paderborn, Germany}
\email{benjamin.hinrichs@math.upb.de}
\author{Jonas Lampart}
\address{Jonas Lampart, CNRS \& Laboratoire Interdisciplinaire Carnot de Bourgogne (UMR 6303), Université de Bourgogne Franche-Comté, 9 Av. A. Savary, 21078 Dijon Cedex, France.}
\email{jonas.lampart@u-bourgogne.fr}
\subjclass[2020]{Primary 81V73; Secondary 81Q10, 47A10.}
\newcommand{\Pc}{\sfP\!_*}
\newcommand{\GL}{G_{\Lambda}}
\newcommand{\GLt}{\wt G_{\Lambda}}
\begin{document}

\begin{abstract} 
	\noindent
	In the Bogoliubov--Fröhlich model, we prove  that an impurity immersed in a Bose--Einstein condensate forms a stable quasi-particle when the total momentum is less than its mass times the speed of sound. The system thus exhibits superfluid behavior, as this quasi-particle does not experience friction.
	We do not assume any infrared or ultraviolet regularization of the model, which  contains massless excitations and  point-like interactions.
\end{abstract}

\maketitle

\section{Introduction}

An impurity in a Bose--Einstein condensate will create excitations out of the ground state and may form a quasi-particle, called the Bose polaron, consisting of the particle and a surrounding cloud of excitations.
The system is of interest in physics as the impurity can reveal properties of the condensate, such as superfluidity. Moreover, Bose--Einstein condensates are finely controllable experimental platforms from which one hopes to learn about polaron physics in solids by analogy.

The Bogoliubov--Fr\"ohlich Hamiltonian is an effective model for such a system, in which the particle is linearly coupled to Bogoliubov's excitation field.
This model is relevant if the interaction between the particle and the bosons is sufficiently weak to not significantly impact the condensate~\cite{GrusdtDemler.2016},
though there is some debate in the physics literature on what effects this model can or cannot capture~\cite{ChristensenLevinsenBruun.2015,GrusdtSchmidtShchadilovaDemler.2017}.
Recent mathematical results prove that it provides an accurate description of the system in certain mean-field~\cite{MysliwySeiringer.2020, LampartPickl.2022} and dilute~\cite{LampartTriay.2023} regimes.

In this letter we start from the translation-invariant Bogoliubov--Fr\"ohlich Hamiltonian in $\R^3$ and prove that the Bose polaron is stable when the total momentum is less than the impurity mass times the speed of sound. Mathematically, this corresponds to proving that the Hamiltonian at fixed total momentum has an eigenvalue at the bottom of its spectrum. Since the excitations in this model are massless, this eigenvalue is always embedded in the essential spectrum.
One expects that beyond some critical momentum this eigenvalue disappears and the system   exhibits a Cherenkov transition. That is, the polaron would radiate sound waves, thereby slowing down to a stable state of smaller momentum. This has been validated numerically in~\cite{Seetharametal.2021b,Seetharametal.2021}, but there does not seem to be a mathematical proof of such a statement.

The dichotomy of stability at small velocities and a friction effect at high velocities has been studied in a  model of a classical particle interacting with sound waves in the series of works~\cite{FrohlichGangSoffer.2011, FrohlichGangSoffer.2012, EgliGang.2012, Eglietal.2013, FrohlichGang.2014b, FrohlichGang.2014}, and later in~\cite{Leger.2020}. This model can be related to the Bose polaron system in a mean-field regime with a heavy impurity~\cite{Deckertetal.2014}.
A simplified model is obtained by decoupling the directions of propagation of the particle and the waves, which limits the back-reaction of the field on the particle. Such a model was studied in~\cite{BruneauDeBievre.2002} in the classical and~\cite{Bruneau.2007,DeBievreFaupinSchubnel.2017} in the quantum mechanical setting.

\subsection{The Bogoliubov--Fr\"ohlich Polaron}

The Bogoliubov--Fr\"ohlich Hamiltonian is characterized by the dispersion relation of the field of excitations, or phonons, and the form factor of the interaction.
The dispersion relation is
\begin{equation}\label{def:omega}
	\omega(k) \coloneqq \sqrt{\sfc^2 \abs{k}^2 + \xi^2\abs{k}^4},
\end{equation}
where $\sfc>0$ is the speed of sound and $\xi=1/(2m_\mathrm{B})$, for the mass $m_B$ of the bosons in the gas. The form factor of the particle-phonon interaction is
\begin{equation}\label{def:v}
	v_\Lambda(k) \coloneqq \sfg\chr_{\abs k <\Lambda} \sqrt{\abs{k}^2/\omega(k)},
\end{equation}
where $\Lambda$ is an ultraviolet cutoff (that may take the value infinity) and
$\sfg$ is a coupling constant, whose value will not be important in our analysis.

Our model is then realized as a selfadjoint lower-semibounded Hamiltonian on the bosonic Fock space $\FS$ over $L^2(\IR^3)$. We use the standard notation $a_k$, $a_k^*$ for the creation and annihilation operators on $\FS$ in the sense of operator-valued distributions.
As usual, writing $\dG(f) = \int f(k)a_k^*a_k\d k$ and $\ph(g) = \int (g(k)^*a_k + g(k)a_k^*)\d k$ for second quantization and field operators, respectively,
we define the Bogoliubov--Fr\"ohlich Hamiltonian at momentum $P\in\IR^3$ with cutoff $\Lambda<\infty$ by
\begin{equation}\label{def:HL}
	H_\Lambda(P)\coloneqq \tfrac 12(P-\dG(\hat p))^2 + \dG(\omega) + \ph(v_\Lambda),
\end{equation}
where $\wh p=(\wh p_1,\wh p_2, \wh p_3)$ denotes the vector of multiplication operators on $L^2(\IR^3)$ given by $\wh p_i f(k) = k_if(k)$.
Note that we have set the impurity mass equal to one, keeping $\sfc$, $\xi$ and $\sfg$ as the model parameters. Moreover, we may choose $\sfg\ge0$ without loss of generality, as the models with different signs (phases) in the coupling are unitarily equivalent via $e^{i\mathrm{arg}(\sfg) \dG(1) }$.

By the Kato--Rellich theorem, it follows from standard estimates that $H_\Lambda(P)$ is a selfadjoint lower-semibounded operator with
 $\cD(H_\Lambda(P))=\cD(H_0(0))=\cD(\dG(\hat p)^2)\cap\cD(\dG(\omega)$ for all $\Lambda<\infty$, since $v_\Lambda,\omega^{-1/2}v_\Lambda\in L^2(\IR^3)$.

For $\Lambda=\infty$, $H(P)$ is defined by the following renormalization result from~\cite{Lampart.2019}. For the convenience of the reader, we sketch the proof in \cref{sec:renormalization}.
\begin{prop}\label{prop:ren}
	There exist $(\Sigma_\Lambda)_{\Lambda\ge 0}\subset \IR$  and, for all $P\in\IR^3$, a selfadjoint lower-semibounded operator $H(P)$ (given in \cref{thm:ren}) such that $H_\Lambda(P)-\Sigma_\Lambda$ converges to $H(P)$ as $\Lambda\to\infty$ in the norm resolvent sense.
\end{prop}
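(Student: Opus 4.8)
The plan is to prove \cref{prop:ren} by a Gross (dressing) transformation. Fix $P\in\IR^3$, write $H_0(P)\coloneqq\tfrac12(P-\dG(\wh p))^2+\dG(\omega)$ so that $H_\Lambda(P)=H_0(P)+\ph(v_\Lambda)$, and abbreviate $\pi\coloneqq P-\dG(\wh p)$. The divergence is purely ultraviolet: since $\omega(k)\sim\xi\abs k^2$ as $\abs k\to\infty$ we have $v_\Lambda(k)\to\sfg/\sqrt\xi$, so $\norm{\omega^{-1/2}v_\Lambda}$ diverges like $\Lambda^{1/2}$ and $\ph(v_\Lambda)$ is not even form-bounded relative to $H_0(P)$; near $k=0$, by contrast, $\omega^{-1/2}v_\Lambda$ is bounded and causes no trouble. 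The first step is to introduce the dressing kernel $b_\Lambda(k)\coloneqq-v_\Lambda(k)/(\omega(k)+\tfrac12\abs k^2)$, which lies in $L^2(\IR^3)$ because $b_\Lambda(k)\sim\abs k^{-2}$ for large $\abs k$ (and $b_\Lambda(k)\sim\abs k^{-1/2}$ near $k=0$), and the unitary $U_\Lambda\coloneqq\exp\!\bigl(a^*(b_\Lambda)-a(b_\Lambda)\bigr)$, with $a^*(b)=\int b(k)a_k^*\,\d k$ the smeared creation operator. Expanding $U_\Lambda^*H_\Lambda(P)U_\Lambda$ by the Baker--Campbell--Hausdorff series, the choice of $b_\Lambda$ makes $[H_0(P),a^*(b_\Lambda)-a(b_\Lambda)]$ cancel $\ph(v_\Lambda)$ up to a divergent constant $\Sigma_\Lambda$ of order $\Lambda$ (coming from $\tfrac12[\ph(v_\Lambda),a^*(b_\Lambda)-a(b_\Lambda)]$) and a residual field interaction whose leading part is $-\int b_\Lambda(k)\bigl(a_k^*(k\cdot\pi)+(k\cdot\pi)a_k\bigr)\d k$, produced by $[\tfrac12\pi^2,a^*(b_\Lambda)-a(b_\Lambda)]$.

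The main obstacle --- and the feature that makes the Bogoliubov--Fröhlich model more delicate than the Nelson model --- is that a single dressing does not suffice: the residual interaction still contains a term of the form $a^*(w_\Lambda)+a(w_\Lambda)$ with $w_\Lambda(k)\sim\abs k^{-1}$ (times the operator $\pi$), whose $L^2$-norm still diverges like $\Lambda^{1/2}$, so that even the dressed interaction is not uniformly form-bounded relative to $H_0(P)$. I would therefore perform a \emph{second} Gross transformation of the same type, with a kernel $\sim w_\Lambda(k)/(\omega(k)+\tfrac12\abs k^2)\sim\abs k^{-3}$, which absorbs a further constant into $\Sigma_\Lambda$ and leaves a perturbation $V_\Lambda$ that is at last relatively bounded --- in the form or the operator sense --- with respect to $H_0(P)$, with relative bound that can be taken arbitrarily small \emph{uniformly in $\Lambda$}, and such that $V_\Lambda\to V_\infty$ as the defining kernels converge in the relevant weighted $L^2$ norms. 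The uniform control is obtained by splitting at an intermediate momentum scale $K$: for $\abs k\le K$ one uses the standard $\omega^{-1/2}$-weighted estimates, while for $\abs k>K$ one exploits that $\omega$ is bounded below, so that $\int_{\abs k>K}a_k^*a_k\,\d k\lesssim K^{-1}\dG(\omega)\le K^{-1}H_0(P)$ and the relative bounds carry a small prefactor.

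With this in place, \cref{prop:ren} follows. By the KLMN (or Kato--Rellich) theorem, the twice-dressed operator $\wh H_\Lambda(P)\coloneqq H_0(P)+V_\Lambda$ and its limit $\wh H(P)\coloneqq H_0(P)+V_\infty$ are self-adjoint and lower-semibounded on a common domain, with lower bound and resolvent norms uniform in $\Lambda$; from the resolvent identity $(\wh H_\Lambda(P)+C)^{-1}-(\wh H(P)+C)^{-1}=(\wh H_\Lambda(P)+C)^{-1}(V_\infty-V_\Lambda)(\wh H(P)+C)^{-1}$ and $\norm{(V_\infty-V_\Lambda)(H_0(P)+C)^{-1}}\to0$ one concludes that $\wh H_\Lambda(P)\to\wh H(P)$ in norm resolvent sense. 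The operator $H(P)$ --- the one recorded in \cref{thm:ren} --- is then obtained as $W_\infty\wh H(P)W_\infty^*$, with $W_\infty$ the product of the two limiting dressing transformations, and since $H_\Lambda(P)-\Sigma_\Lambda=W_\Lambda\wh H_\Lambda(P)W_\Lambda^*$ the convergence has to be carried back. This last step is not entirely automatic, because $W_\Lambda\to W_\infty$ only strongly; it is handled by a careful argument using that the differences of the dressing kernels are supported on $\set{\abs k>\Lambda}$, where the free energies are large, together with the uniform bounds from the previous step. I expect the construction and uniform control of the residual interaction after the second transformation --- in particular, carrying the operator coefficients $\pi=P-\dG(\wh p)$ through both transformations --- to be the most technical part.
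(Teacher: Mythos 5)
Your first step is fine: the Weyl (Gross) transformation with kernel $b_\Lambda=-v_\Lambda/(\omega+\tfrac12\abs k^2)\in L^2(\IR^3)$ cancels $\ph(v_\Lambda)$ exactly and produces the linearly divergent constant, matching $\Sigma^{(1)}_\Lambda$ in \cref{eq:SL1}. The gap is in the second step. After this conjugation the residual interaction consists of (i) the linear term $-a^*(kb_\Lambda)\cdot\pi+\mathrm{h.c.}$ with the \emph{operator} coefficient $\pi=P-\dG(\wh p)$ and (ii) the quadratic term $\tfrac12\,{:}\ph(kb_\Lambda)^2{:}$, both coming from expanding $\tfrac12\bigl(\pi-\ph(kb_\Lambda)\bigr)^2$. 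A second Weyl transformation with a scalar kernel cannot do what you ask of it: conjugation only shifts $a_k$ by the number $b_2(k)$, so every new linear term it generates out of $H_0(P)$ has a scalar coefficient and cannot cancel a term proportional to $\pi$ (cancelling $a^*(kb_\Lambda)\cdot\pi$ would force $kb_2=-kb_\Lambda$, i.e.\ undoing the first dressing); instead it re-creates a scalar linear term $\ph\bigl((\omega+\tfrac12\abs k^2)b_2\bigr)$ that had just been removed. Cancelling a $\pi$-dependent coupling requires a momentum-dependent, operator-valued dressing, which is exactly what $G_\Lambda$ and $\GLt$ in \cref{sec:renormalization} are, and at that point one is no longer conjugating by an explicit unitary but doing the analysis of \cref{thm:ren}. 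Moreover you have misidentified the divergent piece: the term (i) is in fact harmless in the form sense, because $\norm{kb_\Lambda/\sqrt\omega}$ is finite uniformly in $\Lambda$ ($kb_\Lambda/\sqrt\omega\sim\abs k^{-2}$), so it is uniformly form-bounded relative to $H_0(P)$. The genuine obstruction is the pair creation/annihilation part of (ii), with kernel $(k\cdot\ell)\,b_\Lambda(k)b_\Lambda(\ell)\sim\abs k^{-1}\abs\ell^{-1}$, which is \emph{not} form-bounded relative to $H_0(P)$ uniformly in $\Lambda$: its second-order contribution behaves like $\int\!\!\int \bigl[\abs k^2\abs\ell^2(\abs k^2+\abs\ell^2+\tfrac12(k+\ell)^2)\bigr]^{-1}\d k\,\d\ell$, which diverges logarithmically, and a Weyl transformation leaves quadratic terms untouched.

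This is precisely the origin of the second counterterm $\Sigma^{(2)}_\Lambda\sim\log\Lambda$ in \cref{eq:SL2}, and it is why the Bogoliubov--Fr\"ohlich model is strictly more singular than the Nelson model, where one Gross transformation does suffice. Your scheme subtracts only $e_1\Lambda+\mathcal O(1)$: the second kernel you propose, $\sim\abs k^{-3}$, is square integrable uniformly in $\Lambda$ and can only contribute convergent constants. Since a renormalization constant for norm resolvent convergence is unique up to terms that converge as $\Lambda\to\infty$, and the correct constant contains a genuinely $\log\Lambda$-divergent part (this is the content of \cite{Lampart.2019}, reflected in $\Sigma_\Lambda=e_1\Lambda+e_2\log\Lambda+\mathcal O(1)$), your doubly dressed operators minus your constants cannot converge in norm resolvent sense; equivalently, the claim that the residual perturbation has arbitrarily small relative bound uniformly in $\Lambda$ is false, as it would imply a uniform lower bound contradicting the logarithmic divergence of the ground-state energy after the linear subtraction. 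By comparison, the final issue you flag (transporting norm resolvent convergence back through the dressings, which converge only strongly) is minor and repairable; the missing $\log\Lambda$ renormalization, and the inability of scalar coherent-state dressings to produce it, is the essential gap that the paper's operator-valued construction $(1-\GLt^*)(H_0(P)+T_\Lambda+\mu)(1-\GLt)+\dots$ is designed to close.
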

\begin{proof}
	The statement is that of \cref{thm:ren} with $\kappa=0$.
\end{proof}
We are interested in studying the critical momentum of the operator $H(P)$.

\subsection{Critical Momentum}
As described earlier in this introduction, the polaron may become unstable for large momentum. We define the the critical momentum as
\begin{equation}
	\Pc \coloneqq \sup \{P\in\IR^3 \colon \inf \sigma (H(P))\ \mbox{is an eigenvalue of}\ H(P)   \}.
\end{equation}
The main result of this article can now easily be stated.
\begin{thm}\label{thm:critmom}
	For any coupling constant $\sfg\ge 0$ and any speed of sound $\sfc> 0$, we have
	\[\Pc \ge \sfc. \]
\end{thm}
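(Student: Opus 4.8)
The plan is to show that for every $P$ with $\abs{P}<\sfc$ the infimum $E(P)\coloneqq\inf\sigma(H(P))$ is an eigenvalue, by establishing a \emph{binding inequality} of the form
\begin{equation}\label{eq:binding-plan}
	E(P) < \inf_{0\neq k}\big(E(P-k)+\omega(k)\big)
	\quad\text{and, more generally,}\quad
	E(P) < \inf_{n\ge 1}\inf_{k_1,\dots,k_n}\Big(E\big(P-\textstyle\sum_i k_i\big)+\textstyle\sum_i\omega(k_i)\Big),
\end{equation}
which by an HVZ-type theorem for fixed-momentum polaron Hamiltonians identifies the right-hand side of \eqref{eq:binding-plan} with $\inf\spess(H(P))$, so that \eqref{eq:binding-plan} forces $E(P)$ to be an isolated (hence attained) eigenvalue — or at least an eigenvalue, since in the massless case one must argue that the bottom is not a non-isolated point of the essential spectrum accumulated only by scattering states. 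First I would record the basic structural facts about $P\mapsto E(P)$: it is continuous, rotation invariant (so $E(P)=E(\abs{P})$), and satisfies the a priori bound $E(P)\le E(0)+\tfrac12\abs{P}^2$ coming from the diamagnetic-type inequality for the minimal coupling term; more importantly, I want the \emph{strict} inequality $E(P)<E(0)+\tfrac12 \abs{P}^2$ for $P\neq 0$, together with subadditivity-type control, which is where the condition $\abs{P}<\sfc$ enters: the dispersion $\omega(k)=\sqrt{\sfc^2\abs{k}^2+\xi^2\abs{k}^4}\ge \sfc\abs{k}$ means that emitting a phonon of momentum $k$ costs at least $\sfc\abs k$, while the kinetic gain in going from momentum $P$ to $P-k$ is controlled by $\abs{P}\abs{k}$ to leading order; when $\abs{P}<\sfc$ the phonon cost strictly dominates for small $k$, which is exactly the regime that is delicate because $\omega$ is massless there.

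The key steps, in order, would be: (1) prove the soft HVZ statement $\inf\spess(H(P)) = \min_{n\ge1}\inf_{k_1,\dots,k_n}\big(E(P-\sum k_i)+\sum\omega(k_i)\big)$ for the renormalized operator $H(P)$ — one direction (Weyl sequences built from an approximate ground state of $H(P-k)$ tensored with a phonon wavepacket near momentum $k$) is standard, and the other direction uses a partition-of-unity/IMS localization in Fock space adapted to the renormalized Hamiltonian, invoking \cref{prop:ren} to transfer localization estimates from $H_\Lambda(P)$ and pass to the limit; (2) using $\omega(k)\ge\sfc\abs k$ and the rotation invariance, reduce the $n$-phonon infimum to the one-phonon bound by a convexity/triangle-inequality argument, so that it suffices to prove $E(P)<E(P-k)+\sfc\abs k$ for all $k\neq0$; (3) combine this with the a priori Lipschitz-type bound $E(P-k)\ge E(P)-\abs{P}\abs k-\tfrac12\abs k^2+(\text{correction})$ — here I would actually want the sharper statement that $E$ is differentiable in $\abs P$ on $[0,\sfc)$ with $\abs{\nabla E(P)}<\sfc$, or at least the one-sided estimate $E(P)-E(P-k)\le \abs{P}\abs{k}+o(\abs k)$ uniformly, which for $\abs P<\sfc$ yields $E(P-k)+\sfc\abs k-E(P)\ge(\sfc-\abs P)\abs k+o(\abs k)>0$ for small $k$, and for $\abs k$ bounded away from $0$ one closes the gap using the quadratic term and the explicit growth of $\omega$; (4) conclude that $E(P)<\inf\spess(H(P))$ and therefore $E(P)$ is an eigenvalue, so $\Pc\ge\sfc$ (and by rotation invariance and continuity the sup is over the full open ball, giving $\Pc\ge\sfc$).

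The main obstacle I expect is Step (3) — namely obtaining a sufficiently strong a priori bound on the increments of $E(\cdot)$ near $k=0$ that survives the absence of an infrared cutoff. Without a phonon mass, the naive second-order perturbation estimate for the self-energy shift is only marginally convergent, and one cannot simply use a trial state of the form (ground state of $H(P-k)$) $\otimes$ (coherent cloud) because the relevant cloud has infinitely many soft phonons after renormalization. I anticipate this is handled by working with the \emph{Gross transformation} / dressing transformation that is already implicit in the construction of $H(P)$ in \cref{sec:renormalization}: conjugating by a suitable (momentum-dependent) Weyl operator removes the singular part of the interaction, replaces $\ph(v_\Lambda)$ by a regularized effective interaction plus a $c$-number, and crucially renders the map $P\mapsto H(P)$ (in the dressed picture) differentiable with controllable derivative, so that the ground-state energy inherits Lipschitz bounds with constant related to $\abs P$ rather than to $\sfc$. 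The secondary difficulty is the massless HVZ theorem in Step (1): because $\inf\spess(H(P))$ may equal $E(P-k)+\omega(k)$ with the infimum \emph{attained in the limit $k\to0$} at value $E(P)$ itself when $\abs P=\sfc$, the strict inequality genuinely degenerates at the boundary, which is why the theorem gives $\Pc\ge\sfc$ and not $\Pc>\sfc$; establishing strictness for all $\abs P<\sfc$ requires the gain $(\sfc-\abs P)$ to be strictly positive, so the argument is tight and the constant $\sfc$ is the natural one.
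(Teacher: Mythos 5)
There is a genuine gap, and it sits exactly where your plan puts its weight: the binding inequality \eqref{eq:binding-plan} can never hold for this model. Since $\omega(0)=0$ and $P\mapsto E(P)$ is continuous, letting $k\to 0$ in $E(P-k)+\omega(k)$ shows that $\inf_{k\neq0}\big(E(P-k)+\omega(k)\big)=E(P)$ for every $P$, so the HVZ-type identity yields $\inf\spess(H(P))=E(P)$ identically — the paper states this explicitly as the infrared problem. Your pointwise estimate $E(P-k)+\omega(k)-E(P)\ge(\sfc-\abs P)\abs k$ is correct for each fixed $k\neq0$, but the gap degenerates as $k\to0$ at \emph{every} $P$, not only at $\abs P=\sfc$; consequently no strict inequality between $E(P)$ and $\inf\spess(H(P))$ is available, the sought eigenvalue is embedded in the essential spectrum, and the route ``binding inequality $\Rightarrow$ isolated eigenvalue'' cannot certify its existence. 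You acknowledge the issue in passing (``or at least an eigenvalue\dots''), but neither the Gross/dressing transformation you invoke for the Lipschitz bound nor sharper differentiability of $E$ addresses it: those only improve Step (3), which is not where the argument breaks. Incidentally, the increment bound you want is obtained in the paper much more cheaply, from $0\le E(P)-E(0)\le\tfrac12\abs P^2$ and convexity of $P\mapsto\tfrac12 P^2-E(P)$, giving $E(P-K)-E(P)\ge-\abs K\abs P$ (\cref{lem:convex}); no dressing transformation or derivative control is needed.

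The missing idea is an infrared regularization plus compactness. The paper adds an artificial mass term $\kappa N$, for which your HVZ-plus-increment argument does work verbatim: by \cref{prop:HVZ} and \cref{lem:convex} the gap above $E_{\kappa,\Lambda}(P)$ is at least $\kappa$ when $\abs P\le\sfc$, so the massive Hamiltonians have genuine ground states $\psi_\kappa$ (\cref{cor:massiveexistence}). The point of the restriction $\abs P<\sfc$ is then a \emph{uniform} resolvent bound $\|(H_{\kappa,\Lambda}(P-k)-E_{\kappa,\Lambda}(P)+\omega(k)+\kappa)^{-1}\|\le((\sfc-\abs P)\abs k)^{-1}$, which, fed into the pull-through formula together with the infrared regularity of $v$, gives bounds on $\|a_k\psi_\kappa\|$ and on the modulus of continuity $\|(a_{k+p}-a_k)\psi_\kappa\|$ that are uniform in $\kappa$ and $\Lambda$ (\cref{lem:gscom}). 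These place all $\psi_\kappa$ in a fixed compact set $\overline{\sG_r}$ (\cref{lem:compact}, in the spirit of Griesemer--Lieb--Loss), so a subsequence converges in norm as $\kappa\downarrow0$ to a normalized vector which, by convergence of the ground-state energies, minimizes the quadratic form of $H(P)$ and hence is the desired (embedded) eigenvector (\cref{thm:gsex}). Your proposal contains the correct massive-case ingredients but lacks this regularize-and-compactness mechanism, which is the actual content of the proof.
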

\begin{proof}
	The statement is an immediate corollary of \cref{thm:gsex}.
\end{proof}
\begin{rem}
	It would be interesting to show that $\Pc$ is finite, and that there is a unique transition, i.e., $\inf \sigma (H(P))$ is an eigenvalue if and only if $P<\Pc$. The numerical study~\cite{Seetharametal.2021b} supports this picture and indicates that the second derivative of $E$ has a jump at $\Pc$.
	Moreover,~\cite[Fig.3(e)]{Seetharametal.2021b} suggests that $\Pc$ increases from $\sfc$ to infinity as $\sfg\to \infty$.
\end{rem}

\begin{rem}
An interpretation of the statement is to think of $\Pc=m_*\sfc$, where $m_*$ is the effective mass of the polaron (compare~\cite{Seetharametal.2021}). Then we have shown $m_*\ge1$, meaning that the quasi-particle is heavier than the impurity of mass one,  in agreement with the picure that the particle is dressed by a cloud of phonons, increasing its effective mass.
This definition of an effective mass is, of course, different from the common definition by the curvature $\partial_{\abs P}^2 E(P)|_{P=0}$ at zero, but one may still expect similar qualitative behavior, see~\cite{Seiringer.2021, DybalskiSpohn.2020} for a discussion of the latter quantity in the Fr\"ohlich polaron model.
\end{rem}

In order to prove Theorem~\ref{thm:critmom}, we have to deal with both an ultraviolet and an infrared problem. The first is due to the fact that $v_\infty\notin L^2(\R^3)$ (and also $\omega^{-1/2}v_\infty \notin  L^2(\R^3)$).
Using the method of interior boundary conditions, we can nevertheless describe the Bogoliubov--Fr\"ohlich Hamiltonian, in particular its domain, without any ultraviolet regularization. This goes back to a recent article by the second author \cite{Lampart.2019}, building on techniques developed for the related Fr\"ohlich and Nelson models in \cite{LampartSchmidt.2019} and improved on in the subsequent articles \cite{Schmidt.2019,Lampart.2019b,Schmidt.2021,Lampart.2023}.

The infrared problem is due to the fact that $\omega(0)=0$, which entails that $H(P)$ does not have a spectral gap and $\inf \sigma(H(P))=\inf \sigma_\mathrm{ess}(H(P))$.
For massive polaron models, i.e., models satisfying $\essinf \omega>0$, the existence of ground states is well known, see for example \cite{Frohlich.1973,Spohn.1988,DerezinskiGerard.1999}.
In the massless case, one distinguishes between the infrared singular case $v/\omega\notin L^2(\IR^3)$ and the infrared regular one $v/\omega\in L^2(\IR^3)$.
In the infrared singular case, e.g., the famous Nelson model \cite{Nelson.1964}, absence of ground states at arbitrary total momentum (and all non-zero couplings) has been shown, cf. \cite{Frohlich.1973,HaslerHerbst.2008a,Dam.2018,DamHinrichs.2021}. In our case, however, the model is infrared-regular, as can be easily checked.
There exists a variety of perturbative methods to prove existence of ground states in such a case for small values of the total momentum and the coupling constant, e.g., 
operator theoretic renormalization \cite{BachFroehlichSigal.1998b}, iterated perturbation theory \cite{Pizzo.2003,DybalskiPizzo.2014} and functional integration methods \cite{Spohn.1998,BetzHiroshimaLorincziMinlosSpohn.2002}.
In particular, in presence of an ultraviolet cutoff $\Lambda<\infty$ and for small coupling, the fact that $\Pc>0$ follows from~\cite[Proposition~1.1]{DybalskiPizzo.2014}.
Hence, we extend the existence of a ground state to the case without ultraviolet cutoff, arbitrary coupling and a larger set of total momenta.
The method we use in our proof is an adaption of a compactness argument first applied in \cite{GriesemerLiebLoss.2001} and subsequently employed in the study of various models, e.g., the spin boson model \cite{HaslerHinrichsSiebert.2021a}, the Nelson model \cite{HiroshimaMatte.2019,HaslerHinrichsSiebert.2023} and the Pauli--Fierz model \cite{Matte.2016,HaslerSiebert.2020}.
The general strategy is to introduce an artificial boson mass $\kappa>0$, and then prove that the set of ground states with $\kappa\to 0$ is pre-compact and provides a minimizing sequence for $H(P)$.

In the remainder of this letter,
we sketch the renormalization procedure leading to \cref{prop:ren} in \cref{sec:renormalization} and give the proof of \cref{thm:critmom} in \cref{sec:existence}.


\section{Renormalization and Properties of the Bogoliubov--Fr\"ohlich Polaron}\label{sec:renormalization}
In this \lcnamecref{sec:renormalization}, we sketch the proof of \cref{prop:ren}, by
reviewing the renormalization method employed in \cite{Lampart.2020}.
The key idea is to identify a divergent and $P$-independent contribution $\Sigma_\Lambda$ to $ \inf \sigma (H_\Lambda(P))$. This contribution is of the form
\begin{equation}
 \Sigma_\Lambda = e_1 \Lambda + e_2  \log \Lambda + \mathcal{O}(1).
\end{equation}
The two divergent contributions of different orders arise in a two-step procedure of rewriting $H_\Lambda(P)$.

Throughout this section, we assume $P\in\IR^3$ to be fixed.
We emphasize that most of the defined objects, except for the contributions to $\Sigma_\Lambda$, do have a $P$-dependence.
We now fix some parameter $\mu>0$ and define
\begin{equation}
 \GL = - \big( a(v_\Lambda) (H_0(P)+\mu)^{-1}\big)^*.
\end{equation}
Employing that $\omega^{-1}v\in L^2(\IR^3)$, one can show that $\GL$ is a bounded operator, including the case $\Lambda=\infty$, see the proof of \cref{thm:ren} below for more details.
Further, for $\Lambda<\infty$, we have the simple identity
\begin{equation}\label{eq:G-identity}
 H_\Lambda(P)=(1-\GL^*)(H_0(P)+\mu)(1-\GL) - \GL^*(H_0(P)+\mu)\GL - \mu,
\end{equation}
which follows by expanding the product.
The first singular contribution is contained in the term
\begin{equation}
 -    \GL^*(H_0(P)+\mu)\GL = - a(v_\Lambda)(H_0(P)+\mu)^{-1}a^*(v_\Lambda)
\end{equation}
To make it explicit, we will put the creation and annihilation operators in this expression in normal order.
With the pull-through formula $a_k(H_0(P)+\mu)^{-1}=(H_0(P-k)+\omega(k)+\mu)^{-1}a_k$,
which holds by inspection on every $n$-particle sector of $\cF$ (see for example \cite[Lemma~IV.8]{BachFroehlichSigal.1998b}), we find
\begin{align}
  &- a(v_\Lambda)(H_0(P)+\mu)^{-1}a^*(v_\Lambda)  = - \int a_k \frac{v_\Lambda(k) v_\Lambda(\ell)}{H_0(P)+\mu}a_\ell^*\d k \d \ell  \\
  &\qquad= - \int \frac{v_\Lambda(k)^2}{H_0(P-k)+\omega(k)+\mu}\d k - \int a_\ell^* \frac{v_\Lambda(k) v_\Lambda(\ell)}{H_0(P-k-\ell)+\omega(k)+\omega(\ell)+\mu}a_k \d k \d\ell.\notag
\end{align}
With this order of $a^*_\ell, a_k$, the second term will be well defined also for $\Lambda=\infty$ as an unbounded operator, since the decay of   $a_k\Psi$  in $k$ for an element $\Psi$ of its domain will make the integral convergent.
For the first term this is not the case, and we will need to first subtract its divergent contribution to take $\Lambda\to \infty$. This can be chosen as
\begin{equation}\label{eq:SL1}
 \Sigma^{(1)}_\Lambda= - \int \frac{v_\Lambda(k)^2}{\frac12k^2+\omega(k)}\d k,
\end{equation}
which has a divergence proportional to $\Lambda$ since $v_\Lambda(k)$ is of order one for large $k$.
We then define $T_{\Lambda,1}=\Theta_{\Lambda,1,0}+\Theta_{\Lambda,1,1}$, where $\Theta_{\Lambda,1,0}=\theta_{\Lambda,1,0}(\dG(\hat p), \dG(\omega))$ is a multiplication operator in the momentum representation and $\Theta_{\Lambda,1,1}= \int a_\ell^*\theta_{\Lambda,1,1}(\dG(\hat p), \dG(\omega),k,\ell)a_k\d k\d l$ is an integral operator, with
\begin{equation}\label{eq:theta-def}
\begin{aligned}
                 \theta_{\Lambda,1,0} (p, \eta)&= - \int \bigg(\frac{v_\Lambda(k)^2}{\frac12(P-p-k)^2 + \eta+\omega(k)+\mu}-\frac{v_\Lambda(k)^2}{\frac12k^2+\omega(k)}\bigg) \d k \\
 \theta_{\Lambda,1,1}(p, \eta,k,\ell) &=-  \frac{v_\Lambda(k) v_\Lambda(\ell)}{\frac12(P-p-k-\ell)^2 + \eta+\omega(k)+\omega(\ell)+\mu}.
                \end{aligned}
 \end{equation}
Now, $T_\Lambda$ makes sense also for $\Lambda=\infty$, since the integral defining $\theta_{\Lambda,1,0}$ has a limit for $\Lambda\to\infty$, and hence we could try to employ the identity
\begin{equation}
	\GL^*(H_0(P)+\mu)\GL - \Sigma^{(1)}_\Lambda = T_\Lambda
\end{equation}
to define the second term in \cref{eq:G-identity} for a definition of $H(P)$.
This is not enough, however, to remove the cutoff completely, since the (form) domain of the first term $(1-G_\infty^*)(H_0(P)+\mu)(1-G_\infty)$ is not contained in the (form) domain of $T_\infty$. To remedy this issue, we include $T_\Lambda$ with the free operator.

For $\Lambda\in \R_+\cup\{\infty\}$, let
\begin{equation}\label{eq:G-tilde-def}
 \GLt = - \big( a(v_\Lambda) (H_0(P)+T_\Lambda+\mu)^{-1}\big)^*.
\end{equation}
Then we can write a similar identity to \eqref{eq:G-identity} for $\Lambda<\infty$, explicitly
\begin{equation}\label{eq:Gtilde-identity}
 H_\Lambda =  (1-\GLt^*)(H_0(P)+T_\Lambda+\mu)(1- \GLt) -a(v_\Lambda)(H_0(P)+T_\Lambda+\mu)^{-1}a^*(v_\Lambda)-T_\Lambda - \mu.
\end{equation}
Expanding the resolvent gives
\begin{align}
&- a(v_\Lambda)(H_0(P)+T_\Lambda+\mu)^{-1}a^*(v_\Lambda)-T_\Lambda \\
&\quad= \Sigma_\Lambda^{(1)} + a(v_\Lambda)(H_0(P)+T_\Lambda+\mu)^{-1}T_\Lambda(H_0(P)+\mu)^{-1}a^*(v_\Lambda)\notag \\
&\quad  = \Sigma_\Lambda^{(1)} + a(v_\Lambda)(H_0(P)+\mu)^{-1}T_\Lambda(H_0(P)+\mu)^{-1}a^*(v_\Lambda)  \notag \\
&\qquad - a(v_\Lambda)(H_0(P)+\mu)^{-1}T_\Lambda(H_0(P)+T_\Lambda+\mu)^{-1}T_\Lambda(H_0(P)+\mu)^{-1}a^*(v_\Lambda). \notag
\end{align}
The term in the last line is regular in the case $\Lambda=\infty$ and will be treated as a remainder, while the first still contains the logarithmic divergence. To extract this, we proceed as before and put the creation and annihilation operators in normal order. However, there is now also the possibility of picking up a commutator between the operators in $\Theta_{\Lambda,1,1}$ and the outer creation/annihilation operators. With this in mind, the term with no remaining creation and annihilation operators reads
\begin{align*}
 \int  \frac{v_\Lambda(k)^2 \theta_{\Lambda,1,0}(\dG(\hat p)+k, \dG(\omega)+\omega(k))}{(H_0(P-k)+\omega(k)+\mu)^2}\d k - \int \frac{v_\Lambda(k) v_\Lambda(\ell)\theta_{\Lambda,1,1}(\dG(\hat p), \dG(\omega),k,\ell)}{(H_0(P-k)+\omega(k)+\mu)(H_0(P-\ell)+\omega(\ell)+\mu)}\d k\d \ell.
\end{align*}
These integrals have a logarithmic divergence as $\Lambda\to \infty$, captured by
\begin{equation}\label{eq:SL2}
	\begin{aligned}
		 \Sigma_\Lambda^{(2)} = &\int  \frac{v_\Lambda(k)^2 \theta_{\Lambda,1,0}(k, \omega(k))}{(\frac12k^2+\omega(k))^2}\d k  \\ &- \int \frac{v_\Lambda(k)^2 v_\Lambda(\ell)^2}{(\frac12k^2+\omega(k))(\frac12(k+\ell)^2 + \omega(k)+ \omega(\ell))(\frac12\ell^2+\omega(\ell))}\d k\d \ell.
	\end{aligned}
\end{equation}
After subtracting this, we define $\widetilde T_{\Lambda}=\Theta_{\Lambda,2,0} + \Theta_{\Lambda,2,1} +\Theta_{\Lambda,2,2}$, where $\Theta_{\Lambda,2,0}$ is a multiplication operator of the same type as $\Theta_{\Lambda,1,0}$, and $\Theta_{\Lambda,2,1}, \Theta_{\Lambda,2,2}$ are integrals with one, respectively two, remaining creation and annihilation operators.  The expression for $\Theta_{\Lambda,2,0}$ is given by
\begin{align}\label{eq:theta2-0}
 \theta_{\Lambda,2,0}(p,\eta) &= \int  \frac{v_\Lambda(k)^2 \theta_{\Lambda,1,0}(p+k, \eta+\omega(k))}{(\frac12(P-p-k)^2+ \eta+\omega(k)+\mu)^2} \d l \\
 &\qquad + \int \frac{v_\Lambda(k) v_\Lambda(\ell)\theta_{\Lambda,1,1}(p, \eta,k,\ell)}{(\frac12(P-p-k)^2+ \eta+\omega(k)+\mu)(\frac12(P-p-\ell)^2+ \eta+\omega(\ell)+\mu)} \d k\d\ell - \Sigma_\Lambda^{(2)}, \notag
\end{align}
where we observe that $\Sigma_\Lambda^{(2)}$ is simply the value of the integrals at $P=p=\eta=\mu=0$.
The integral operators have the kernels
\begin{align}
 \theta_{\Lambda,2,1}(p,\eta,k,\ell) &= \frac{v_\Lambda(k)v_\Lambda(\ell) \theta_{\Lambda,1,0}(p+k+\ell, \eta+\omega(k)+\omega(\ell))}{(\frac12(P-p-k)^2+ \eta+\omega(k)+\mu)(\frac12(P-p-\ell)^2+ \eta+\omega(\ell)+\mu)} \\
 &\qquad + \int \frac{v_\Lambda(\xi)^2\theta_{\Lambda,1,1}(p+\xi, \eta+\omega(\xi),k,\ell)}{(\frac12(P-p-\xi)^2+ \eta+\omega(\xi)+\mu)^2}\d k\d l\notag \\
 \theta_{\Lambda,2,2}(p,\eta,k_1, k_2,\ell_1, \ell_2)&=  \frac{v_\Lambda(k_1) v_\Lambda(\ell_1)\theta_{\Lambda,1,1}(p+k_1+\ell_1, \eta+\omega(k_1)+\omega(\ell_1),k_2,\ell_2)}{(\frac12(P-p-k_1)^2+ \eta+\omega(k_1)+\mu)(\frac12(P-p-\ell_1)^2+ \eta+\omega(\ell_1)+\mu)}.\label{eq:theta2-1}
\end{align}
Again, the definition of $\widetilde T_\Lambda$ may be extended to $\Lambda=\infty$ since these functions are defined also for this value.
Finally, the definition of the remainder term reads
\begin{equation}\label{eq:R-def}
 \begin{aligned}
 	R_\Lambda &= - a(v_\Lambda)(H_0(P)+\mu)^{-1}T_\Lambda(H_0(P)+T_\Lambda+\mu)^{-1}T_\Lambda(H_0(P)+\mu)^{-1}a^*(v_\Lambda)
 	\\& 
 	= G_\Lambda^* T_\Lambda (H_0(P)+T_\Lambda+\mu)^{-1}T_\Lambda G_\Lambda
 	,
 \end{aligned}
\end{equation}
which defines a bounded operator also for $\Lambda=\infty$.

Since we require an infrared regularization in \cref{sec:existence} additionally to the ultraviolet one provided by the cutoff $\Lambda$, we directly consider the family of operators $H_{\kappa,\Lambda}$ given by \cref{def:HL} with $\omega$ replaced by $\omega_\kappa=\omega+\kappa$, i.e.,
\begin{equation}\label{def:HkL}
	H_{\kappa,\Lambda}(P) \coloneqq H_\Lambda(P) + \kappa N - \Sigma_\Lambda^{(1)} - \Sigma_\Lambda^{(2)}\qquad \mbox{for}\ \kappa\ge0,\ \Lambda\in\R_+,
\end{equation}
where $N=\dG(1)$ is the particle number operator as usual
and we incorporated the ultraviolet
renormalization, by directly subtracting the divergent energy contributions as defined in \cref{eq:SL1,eq:SL2}.
Note that $\cD(H_{\kappa,0}(P))= \cD(H_{0,0}(0))\cap \cD(\kappa N)$, so we simply denote this domain by $\cD(H_{\kappa,0})$. For $\Lambda<\infty$, \cref{def:HkL} immediately defines a selfadjoint lower-semibounded operator on $\cD(H_{\kappa,0})$, since $\omega^{-1/2}v_\Lambda\in L^2(\IR^3)$.

The preceding discussion applies in the same way with $\omega_\kappa$, yielding objects $T_\Lambda=T_{\kappa,\Lambda}$, $\widetilde G_\Lambda=\widetilde G_{\kappa,\Lambda}$, $\widetilde T_\Lambda=\widetilde T_{\kappa,\Lambda}$, whose dependence on $\kappa$ we will not make explicit.
\cref{prop:ren} is now a consequence of the following theorem for $\kappa=0$.

\begin{thm}[\cite{Lampart.2020}]\label{thm:ren}
 Let $\kappa\ge 0$ and let $\widetilde G_\infty$, $T_\infty$, $\widetilde T_\infty$, $R_\infty$ be defined by~\eqref{eq:G-tilde-def},~\eqref{eq:theta-def}, ~\eqref{eq:theta2-0}--\eqref{eq:theta2-1}, and~\eqref{eq:R-def} with $\omega=\omega_\kappa$ respectively.
 The operator
 \begin{align*}
  H_{\kappa,\infty}(P)&= (1-\widetilde G_\infty^*)(H_{\kappa,0}(P)+T_\infty+\mu)(1-\widetilde G_\infty) + \widetilde T_\infty + R_\infty -\mu\\
  \cD(H_{\kappa,\infty}(P))&= \big\{ \psi \in \cF : (1-\widetilde G_\infty)\psi\in \cD(H_{\kappa,0})\big\}
 \end{align*}
is selfadjoint and bounded from below.
We have the convergence
\[
 H_{\kappa,\Lambda}(P) \to H_{\kappa,\infty}(P)
\]
in norm resolvent sense.
\end{thm}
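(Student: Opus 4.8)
The plan is to establish the statement in two stages: first constructing $H_{\kappa,\infty}(P)$ as a selfadjoint lower-semibounded operator for each fixed $\kappa\ge 0$, and then proving norm resolvent convergence $H_{\kappa,\Lambda}(P)\to H_{\kappa,\infty}(P)$ as $\Lambda\to\infty$. Since all the operator-theoretic manipulations preceding the theorem were carried out for $\Lambda<\infty$ via the elementary algebraic identities \eqref{eq:G-identity} and \eqref{eq:Gtilde-identity}, the heart of the matter is to show that every object appearing on the right-hand side of the definition of $H_{\kappa,\infty}(P)$ makes sense at $\Lambda=\infty$ with the claimed mapping and boundedness properties, and that the identity \eqref{eq:Gtilde-identity}, rewritten using \eqref{eq:SL1}, \eqref{eq:SL2}, \eqref{eq:theta2-0}--\eqref{eq:theta2-1}, \eqref{eq:R-def} as
\[
	H_{\kappa,\Lambda}(P) = (1-\widetilde G_\Lambda^*)(H_{\kappa,0}(P)+T_\Lambda+\mu)(1-\widetilde G_\Lambda) + \widetilde T_\Lambda + R_\Lambda - \mu,
\]
passes to the limit.

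First I would fix $\mu$ large and collect the uniform-in-$\Lambda$ bounds. The key estimate is that $G_\Lambda = -\bigl(a(v_\Lambda)(H_0(P)+\mu)^{-1}\bigr)^*$ is bounded uniformly in $\Lambda\le\infty$: this follows from $\omega^{-1}v_\infty\in L^2(\R^3)$ (the infrared-regular condition), by a standard $N^{1/2}$-bound argument writing $\norm{a(v_\Lambda)(H_0(P)+\mu)^{-1}\psi}\le \norm{\omega^{-1/2}v_\Lambda}_{L^2}\norm{\dG(\omega)^{1/2}(H_0(P)+\mu)^{-1}\psi}$, and noting $\norm{\omega^{-1/2}v_\Lambda}_{L^2}$ is bounded (indeed $\omega^{-1/2}v_\infty\in L^2$). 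Next, one analyzes $T_\Lambda$: the multiplication part $\Theta_{\Lambda,1,0}$ has a pointwise limit because the integrand in $\theta_{\Lambda,1,0}$ is a difference of two resolvent-type terms whose large-$k$ behavior cancels, leaving an integrable tail, and $\theta_{\Lambda,1,0}(p,\eta)$ grows at most like $\sqrt{p^2+\eta}$ (so $T_\infty$ is $H_{\kappa,0}$-bounded with small relative bound by choosing $\mu$ large); the integral part $\Theta_{\Lambda,1,1}$ is handled by Schur-type/Cauchy–Schwarz bounds on its kernel against $(H_{\kappa,0}+\mu)^{\pm 1/2}$, using that the extra denominator provides enough decay in $k,\ell$. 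With $T_\infty$ controlled, $\widetilde G_\infty = -\bigl(a(v_\infty)(H_0(P)+T_\infty+\mu)^{-1}\bigr)^*$ is bounded by the same argument (the denominator $H_0(P)+T_\infty+\mu$ is comparable to $H_0(P)+\mu$), and $\widetilde T_\infty$, $R_\infty$ are shown to be bounded (or $H_{\kappa,0}$-form-bounded with small bound) by expanding as in the displayed computations and estimating each resulting kernel, the crucial point being that after the second normal-ordering step and the subtraction of $\Sigma_\Lambda^{(2)}$ all remaining integrals converge absolutely as $\Lambda\to\infty$.

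Granting these bounds, selfadjointness and semiboundedness of $H_{\kappa,\infty}(P)$ follow: $1-\widetilde G_\infty$ is boundedly invertible for $\mu$ large (since $\norm{\widetilde G_\infty}<1$), so $(1-\widetilde G_\infty^*)(H_{\kappa,0}(P)+T_\infty+\mu)(1-\widetilde G_\infty)$ is selfadjoint and bounded below on the domain $\{\psi:(1-\widetilde G_\infty)\psi\in\cD(H_{\kappa,0})\}$ as a pullback of a selfadjoint operator by the bounded bijection $1-\widetilde G_\infty$, and adding the bounded (or small relatively bounded) perturbation $\widetilde T_\infty+R_\infty-\mu$ preserves these properties by Kato--Rellich. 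For the norm resolvent convergence, I would write each of $\widetilde G_\Lambda$, $T_\Lambda$, $\widetilde T_\Lambda$, $R_\Lambda$ minus its $\Lambda=\infty$ counterpart and show it tends to $0$ in the relevant operator norm (dominated convergence on the integral kernels, using the uniform bounds as integrable majorants), then combine via the second resolvent identity; the algebraic identity for $H_{\kappa,\Lambda}(P)$ together with convergence of the building blocks gives resolvent convergence on a common domain. The main obstacle is the uniform control of $\widetilde T_\infty$ and $R_\infty$ after the second renormalization step: tracking the commutators generated when normal-ordering $a(v_\Lambda)(H_0(P)+\mu)^{-1}T_\Lambda(H_0(P)+\mu)^{-1}a^*(v_\Lambda)$ and verifying that, once $\Sigma_\Lambda^{(2)}$ is subtracted, every surviving kernel is square-integrable uniformly in $\Lambda$ is the delicate bookkeeping — this is exactly where the logarithmic divergence lives and where the precise form of \eqref{eq:SL2} and \eqref{eq:theta2-0} must be used. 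Since this is the content of \cite{Lampart.2020}, I would carry out the uniform estimates in detail only for the borderline terms and refer to that paper for the remaining routine bounds.
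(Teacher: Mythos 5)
There is a genuine gap at the foundation of your argument: the uniform boundedness of $G_\Lambda$ (and hence of $\widetilde G_\Lambda$) as $\Lambda\to\infty$. You justify it by the $N^{1/2}$-type bound $\norm{a(v_\Lambda)(H_0(P)+\mu)^{-1}\psi}\le \norm{\omega^{-1/2}v_\Lambda}_{L^2}\norm{\dG(\omega)^{1/2}(H_0(P)+\mu)^{-1}\psi}$ together with the claim that $\omega^{-1/2}v_\infty\in L^2(\R^3)$. That claim is false: since $\omega(k)\sim\xi|k|^2$ and $v_\infty(k)\to \sfg/\sqrt{\xi}$ at large $|k|$, one has $\omega^{-1/2}v_\infty(k)\sim C|k|^{-1}$, which is not square integrable at infinity (this is precisely the ultraviolet problem the model has; only the infrared behaviour of $v/\sqrt\omega$ is harmless). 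Consequently $\norm{\omega^{-1/2}v_\Lambda}_{L^2}$ diverges like $\sqrt\Lambda$, and your ``key estimate'' does not give any $\Lambda$-uniform control. The correct route, which the paper uses, is to spend a full power of the resolvent: the bound $\norm{a(f)\dG(\eta)^{-1}}\le\norm{f/\eta}$ together with $\omega^{-1}v_\infty\in L^2(\R^3)$ (true, since $\omega^{-1}v_\infty$ decays like $|k|^{-2}$) yields boundedness of $G_\infty$ and $\widetilde G_\infty$, and in fact the stronger statement $\norm{(H_{\kappa,0}(P)+\mu)^s\widetilde G_\Lambda}\le C$ for $0\le s<1/4$, uniformly in $\Lambda$, which is also what makes $R_\infty$ bounded and lets one absorb $\widetilde T_\infty$. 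Without replacing your half-power argument by this full-power (or kinetic-energy-weighted) estimate, the invertibility of $1-\widetilde G_\infty$, the selfadjointness construction, and the convergence $\widetilde G_\Lambda\to\widetilde G_\infty$ all lack a basis.

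Apart from this, your architecture does match the paper's: uniform relative bounds $\norm{T_\Lambda\psi}\le C\norm{(H_{\kappa,0}(P)+1)^{1/2}\psi}$ and the analogous small-power bounds for $\widetilde T_\Lambda$, convergence of each building block, invertibility of $1-\widetilde G_\infty$ for large $\mu$, Kato--Rellich, and norm resolvent convergence via the identity \eqref{eq:Gtilde-identity}. One further point to tighten: the relative bound for $\widetilde T_\infty$ must ultimately be converted into an operator bound with arbitrarily small constant relative to $K=(1-\widetilde G_\infty^*)(H_{\kappa,0}(P)+T_\infty+\mu)(1-\widetilde G_\infty)$ (using that both $(1-\widetilde G_\infty)\psi$ and $\widetilde G_\infty\psi$ can be controlled in terms of $K\psi$), since Kato--Rellich is applied on the domain $\cD(H_{\kappa,\infty}(P))$, not on $\cD(H_{\kappa,0})$; a form bound relative to $H_{\kappa,0}$ alone would not identify the stated domain.
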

\begin{proof}[Sketch of the proof]
We give a short outline of the proof with references to key technical lemmas for the convenience of the reader.
Throughout this proof, $\kappa\ge0$ is fixed.

The first step is to prove that
\begin{align}\label{eq:Tbounds}
 \begin{aligned}
 	&\|T_\Lambda \psi\|\le C \|(H_{\kappa,0}(P)+1)^{1/2}\psi\| \qquad \mbox{for}\ \Lambda\in\R_+\cup\{\infty\}, \\
 	&\|(T_\Lambda-T_\infty)\psi\| \le C_\Lambda \|(H_{\kappa,0}(P)+1)^{1/2+\eps}\psi\|,
 \end{aligned}
\end{align}
with $\eps>0$ and $\lim_{\Lambda\to \infty} C_\Lambda=0$ (the part~$\Theta_{\Lambda, 1,0}$ can be bounded by an elementary calculation; concerning $\Theta_{\Lambda,1,1}$, see~\cite[Lem.17]{Lampart.2019b} and~\cite[Lemma B.2]{Lampart.2023} for proofs in the case $\kappa>0$ that are easily adapted to $\kappa=0$).

Using this,  one shows that  for $\Lambda\in \R_+ \cup\{\infty\}$, $\GLt$ are bounded operators on $\cF$, satisfying 
\begin{equation}
 \| (H_{\kappa,0}(P)+\mu)^s \GLt \| \le C,\qquad   (H_{\kappa,0}(P)+\mu)^s(\GLt-\widetilde G_\infty) \stackrel{\Lambda\to \infty}{\longrightarrow} 0 \qquad\text{for $0\le s<1/4$}.
\end{equation}
This follows easily from the bound $\|a(f)\dG(\eta)^{-1/2}\|\le \|f/\eta\|$, $v/\omega\in L^2(\IR^d)$ and the fact that $T_\Lambda$ is an infinitesimal perturbation of $H_{\kappa,0}(P)$.

In particular, for $\mu$ large enough, $ \|\GLt\|<1$, so $1-\GLt$ is invertible. This shows that $\cD(H_{\kappa,\infty}(P))$ is dense and combined with \cref{eq:Tbounds}, the operator
\begin{equation}
 K:=(1-\widetilde G_\infty^*)(H_{\kappa,0}(P)+T_\infty+\mu)(1-\widetilde G_\infty)
\end{equation}
is selfadjoint and bounded from below on this domain.
The terms $\widetilde T_\infty$, $R_\infty$ will be treated as perturbations of $K$. For $R_\infty$, boundedness follows directly from the properties of $T_\Lambda$ and $\widetilde G_\Lambda$.

For $\widetilde T_\Lambda$ one can again show 
\begin{align}
 \|\widetilde T_\Lambda \psi\|&\le C \|(H_{\kappa,0}(P)+1)^{\eps}\psi\|, \\
 \|(\widetilde T_\Lambda-\widetilde T_\infty)\psi\| &\le C_\Lambda \|(H_{\kappa,0}(P)+1)^{\eps}\psi\|
\end{align}
for $\eps>0$ and $\lim_{\Lambda\to \infty} C_\Lambda=0$ (cf.~\cite[Lemma B.2]{Lampart.2023},~\cite[Lem.19]{Lampart.2019b}).
This implies that 
\begin{align}
 \| \widetilde T_\infty \psi \| &\le \| \widetilde T_\infty (1-\widetilde G_\infty)\psi \| + \| \widetilde T_\infty \widetilde G_\infty \psi \|  \\
 &\le C( \|   (H_{\kappa,0}(P)+1)^{\eps} (1-\widetilde G_\infty)\psi\| + \|  (H_{\kappa,0}(P)+1)^{\eps} \widetilde G_\infty\psi\|) \notag \\
 &\le \delta \|  K_\kappa \psi\| + C_\delta \|\psi\|\notag
\end{align}
for any $\delta>0$. Thus $H_{\kappa,\infty}(P)$ is selfadjoint by the Kato--Rellich theorem.

Convergence of resolvents follows from the identity~\eqref{eq:Gtilde-identity}, the resolvent formula and the convergence properties of $T_\Lambda$, $\GLt$ already mentioned. 
\end{proof}
From the proof, we also obtain the following \cref{lem:core}, which relates the domains of $H(P)$ and $N$.
It will be important to our proof of \cref{thm:critmom} in the next \lcnamecref{sec:existence}.
\begin{lem}\label{lem:core}
	For any $P\in\IR^3$,
	the subspace $\cD(N)\cap \cD(H_{0,\infty}(P))$ is a core for $H_{0,\infty}(P)$.
	Further,   $\cD(H_{\kappa,\infty}(P)) = \cD(N)\cap \cD(H_{0,\infty}(P))$ for all $\kappa>0$ and
	$H_{\kappa,\infty}(P) = H_{0,\infty}(P)+\kappa N$.
\end{lem}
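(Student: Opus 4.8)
The plan is to read everything off the interior--boundary structure established in the proof of \cref{thm:ren}, using the elementary fact that the operators $\widetilde G_{\kappa,\Lambda}$ raise the boson number by exactly one. Indeed $H_{\kappa,0}(P)=\tfrac12(P-\dG(\hat p))^2+\dG(\omega_\kappa)$ and $T_{\kappa,\Lambda}=\Theta_{\Lambda,1,0}+\Theta_{\Lambda,1,1}$ commute with $N$, hence so does $(H_{\kappa,0}(P)+T_{\kappa,\Lambda}+\mu)^{-1}$; since $a(v_\Lambda)$ lowers $N$ by one, $\widetilde G_{\kappa,\Lambda}=-\big(a(v_\Lambda)(H_{\kappa,0}(P)+T_{\kappa,\Lambda}+\mu)^{-1}\big)^*$ maps $\cF^{(n)}$ into $\cF^{(n+1)}$ for every $\Lambda\in\R_+\cup\{\infty\}$, and so does each power $\widetilde G_{\kappa,\Lambda}^{\,j}$. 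Combined with $r\coloneqq\sup_{\Lambda\in\R_+\cup\{\infty\}}\|\widetilde G_{\kappa,\Lambda}\|<1$ for $\mu$ large (from the proof of \cref{thm:ren}), the Neumann series $(1-\widetilde G_{\kappa,\Lambda})^{-1}=\sum_{j\ge0}\widetilde G_{\kappa,\Lambda}^{\,j}$ and the identity $N\widetilde G_{\kappa,\Lambda}^{\,j}=\widetilde G_{\kappa,\Lambda}^{\,j}(N+j)$ show that $1-\widetilde G_{\kappa,\Lambda}$ and its inverse are bounded bijections of $\cD(N)$ (graph norm), uniformly in $\Lambda$. Since $\cD(H_{\kappa,0})=\cD(H_{0,0}(0))\cap\cD(N)\subseteq\cD(N)$ for $\kappa>0$, the domain formula of \cref{thm:ren} then gives $\cD(H_{\kappa,\infty}(P))\subseteq\cD(N)$ at once.

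For the core statement one takes $\psi\in\cD(H_{0,\infty}(P))$, sets $\phi\coloneqq(1-\widetilde G_{0,\infty})\psi\in\cD(H_{0,0}(0))$ and $\phi_n\coloneqq\chr_{N\le n}\phi$; since $H_{0,0}(0)$ commutes with $N$ one has $\phi_n\in\cD(H_{0,0}(0))\cap\cD(N)$ with $\phi_n\to\phi$ and $H_{0,0}(0)\phi_n\to H_{0,0}(0)\phi$. Putting $\psi_n\coloneqq(1-\widetilde G_{0,\infty})^{-1}\phi_n$, the first paragraph gives $\psi_n\in\cD(N)$, while $(1-\widetilde G_{0,\infty})\psi_n=\phi_n\in\cD(H_{0,0}(0))$ gives $\psi_n\in\cD(H_{0,\infty}(P))$, and $\psi_n\to\psi$. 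Applying the representation $H_{0,\infty}(P)=(1-\widetilde G_{0,\infty}^*)(H_{0,0}(P)+T_{0,\infty}+\mu)(1-\widetilde G_{0,\infty})+\widetilde T_{0,\infty}+R_{0,\infty}-\mu$ to $\psi_n$, every term converges as $n\to\infty$: $H_{0,0}(P)\phi_n\to H_{0,0}(P)\phi$ (using $\cD(H_{0,0}(P))=\cD(H_{0,0}(0))$ and $[H_{0,0}(P),N]=0$), $T_{0,\infty}\phi_n\to T_{0,\infty}\phi$ by \eqref{eq:Tbounds}, and the $\widetilde T_{0,\infty},R_{0,\infty}$ contributions by their relative boundedness with respect to the first term. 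Hence $H_{0,\infty}(P)\psi_n\to H_{0,\infty}(P)\psi$, so $\cD(N)\cap\cD(H_{0,\infty}(P))$ is a core for $H_{0,\infty}(P)$.

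For $\kappa>0$ one takes $\psi\in\cD(H_{\kappa,\infty}(P))$, sets $\phi\coloneqq(1-\widetilde G_{\kappa,\infty})\psi\in\cD(H_{\kappa,0})=\cD(H_{0,0}(0))\cap\cD(N)$ and defines the approximants $\psi_\Lambda\coloneqq(1-\widetilde G_{\kappa,\Lambda})^{-1}\phi$ for $\Lambda<\infty$. From $\widetilde G_{\kappa,\Lambda}\to\widetilde G_{\kappa,\infty}$ in norm one gets $\psi_\Lambda\to\psi$; from $\widetilde G_{\kappa,\Lambda}^{\,j}(N+j)\phi\to\widetilde G_{\kappa,\infty}^{\,j}(N+j)\phi$ for each $j$, dominated by $r^{\,j}(\|N\phi\|+j\|\phi\|)$ which is summable, the identity $N\psi_\Lambda=\sum_{j\ge0}\widetilde G_{\kappa,\Lambda}^{\,j}(N+j)\phi$ gives $N\psi_\Lambda\to N\psi$. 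By the $\Lambda<\infty$ version of \eqref{eq:Gtilde-identity}, $\psi_\Lambda\in\cD(H_{\kappa,\Lambda}(P))$ with $H_{\kappa,\Lambda}(P)\psi_\Lambda=(1-\widetilde G_{\kappa,\Lambda}^*)(H_{\kappa,0}(P)+T_{\kappa,\Lambda}+\mu)\phi+(\widetilde T_{\kappa,\Lambda}+R_{\kappa,\Lambda}-\mu)\psi_\Lambda$, and, as in the core argument, this converges to $H_{\kappa,\infty}(P)\psi$. Therefore $H_{0,\Lambda}(P)\psi_\Lambda=H_{\kappa,\Lambda}(P)\psi_\Lambda-\kappa N\psi_\Lambda\to H_{\kappa,\infty}(P)\psi-\kappa N\psi$, and since $H_{0,\Lambda}(P)\to H_{0,\infty}(P)$ in norm resolvent sense (\cref{prop:ren}), this forces $\psi\in\cD(H_{0,\infty}(P))$ with $H_{0,\infty}(P)\psi=H_{\kappa,\infty}(P)\psi-\kappa N\psi$. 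Thus $H_{\kappa,\infty}(P)$ is a restriction of the symmetric operator $H_{0,\infty}(P)+\kappa N$ with domain $\cD(N)\cap\cD(H_{0,\infty}(P))$; being self-adjoint it must coincide with it, which yields both $\cD(H_{\kappa,\infty}(P))=\cD(N)\cap\cD(H_{0,\infty}(P))$ and $H_{\kappa,\infty}(P)=H_{0,\infty}(P)+\kappa N$.

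The step I expect to be the main obstacle is $N\psi_\Lambda\to N\psi$: the naive choice $\psi_\Lambda=(H_{\kappa,\Lambda}(P)+z)^{-1}(H_{\kappa,\infty}(P)+z)\psi$ carries no bound on $N\psi_\Lambda$, since in the massless regime $N$ is not dominated by the free Hamiltonian, so norm resolvent convergence alone is useless here. The way around it is to build the approximants from the fixed interior--boundary datum $\phi=(1-\widetilde G_{\kappa,\infty})\psi\in\cD(N)$ and to exploit the exact particle-number shift of $\widetilde G_{\kappa,\Lambda}$, which turns the control of $N\psi_\Lambda$ into a geometric series uniform in $\Lambda$; this is also precisely what gives meaning to the limit $\Lambda\to\infty$ of the elementary identity $H_{\kappa,\Lambda}(P)=H_{0,\Lambda}(P)+\kappa N$.
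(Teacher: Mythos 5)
Your argument is correct, and for the core and domain statements it is essentially the paper's own route: everything hinges on the number-shift relation $N\wt G_\infty=\wt G_\infty(N+1)$ together with $\|\wt G_\infty\|<1$, which is precisely the observation the paper uses; your Neumann-series bijection of $\cD(N)$ and the number-cutoff approximants $\chr_{N\le n}\phi$ are just a more explicit rendering of the paper's remarks that $(1-\wt G_\infty)\cD(N)=\cD(N)$ and that $(1-\wt G_\infty)^{-1}$ carries cores of $H_{\kappa,0}(P)$ (in graph norm) to cores of $H_{\kappa,\infty}(P)$. Where you genuinely deviate is the identity $H_{\kappa,\infty}(P)=H_{0,\infty}(P)+\kappa N$: the paper identifies both sides as the weak graph limit of $H_{\kappa,\Lambda}(P)$, invoking Reed--Simon Thm.~VIII.26 together with the uniform bound $\|N\psi\|\le C\|(H_{\kappa,\Lambda}(P)+\mu)\psi\|$, whereas you build explicit approximants $\psi_\Lambda=(1-\wt G_{\kappa,\Lambda})^{-1}(1-\wt G_{\kappa,\infty})\psi$, control $N\psi_\Lambda$ through the geometric series produced by the number shift, use strong graph convergence of $H_{0,\Lambda}(P)$ (from \cref{prop:ren}) to get $H_{\kappa,\infty}(P)\subseteq H_{0,\infty}(P)+\kappa N$, and finish by maximal symmetry of a self-adjoint operator. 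Your variant avoids the uniform-in-$\Lambda$ relative bound of $N$ by $H_{\kappa,\Lambda}(P)$ (which, as you rightly stress, is the delicate input in the massless setting), at the price of verifying strong convergence of each term when the representation of \cref{thm:ren} is applied to $\psi_\Lambda$ (convergence of $T_\Lambda$, $\wt T_\Lambda$, $R_\Lambda$ via \eqref{eq:Tbounds} and the bounds on $\wt G_\Lambda$, plus a uniform bound on $(H_{\kappa,0}(P)+1)^{\eps}\psi_\Lambda$), and the small unstated check that $\psi_\Lambda\in\cD(H_{\kappa,0})$ for $\Lambda<\infty$ — which does hold, since $\psi_\Lambda\in\cD(N)\subset\cD(a^*(v_\Lambda))$ and $\wt G_{\kappa,\Lambda}$ acts there as $-(H_{\kappa,0}(P)+T_{\kappa,\Lambda}+\mu)^{-1}a^*(v_\Lambda)$, so that \eqref{eq:Gtilde-identity} may be applied to $\psi_\Lambda$. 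These verifications are routine at the level of detail of the paper's own sketch, so I consider your proof complete; it trades the paper's abstract graph-limit uniqueness plus uniform $N$-bound for a more hands-on use of the interior-boundary map, which is a fair exchange.
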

\begin{proof}
From \cref{thm:ren}, we know that
$\cD(H_{\kappa,\infty}(P))=(1-\widetilde G_\infty)^{-1} \cD(H_{\kappa,0})$.
 Moreover, for any core $\cC$ of $H_{\kappa,0}(P)$, $(1-\widetilde G_\infty)^{-1}\cC$ is a core for $H_{\kappa,\infty}(P)$, since $ (1-\widetilde G_\infty)^{-1} : \cD(H_{\kappa,0})\to \cD(K)$ is continuous for the graph norms.
 Hence, to prove the domain statements, it suffices to show $(1-\wt G_\infty)\cD(N) = \cD(N)$.
 This follows from the observation $N\wt G_\infty = \wt G_\infty (N+1)$, implying
 \begin{align}
 	&\|N(1-\wt G_\infty)\psi\| \le \|N\psi\| + \|\wt G_\infty\|\|(N+1)\psi\|,
 	\\
 	&\| N \psi\| \le \| N(1-\widetilde G_\infty)\psi\| + \|N \widetilde G_\infty \psi\| \le \| N(1-\widetilde G_\infty)\psi\| + \|\widetilde G_\infty\| \| (N+1) \psi\|,
 \end{align}
 from where we conclude using $\|\widetilde G_\infty\|<1$.
 Moreover,
 \begin{equation}
  H_{\kappa,\infty}(P) = H_{0,\infty}(P)+\kappa N
 \end{equation}
holds since both sides are the weak graph limit of $H_{\kappa,\Lambda}(P)$, by \cref{thm:ren}, \cite[Thm.VIII.26]{ReedSimon.1972}, and the uniform bound $\|N\psi\|\le C \|(H_{\kappa,\Lambda}(P)+\mu)\psi\|$.
\end{proof}

\begin{rem}
One can observe that $\cD(H(P))\neq \cD(H(P')$ for $P\neq P'$.
This is the case because
\begin{equation}
(H_0(P)+\mu)^{-1}\dG(\hat p)G_\infty,
\end{equation}
which is proportional to the difference of $G_\infty$ for two different values of $P$, does not map $\cD(H_0(0))$ to itself. It does, however, map the form domain of $H_{\kappa,0}(P)$ to itself, so the operators with different total momenta still have comparable quadratic forms. Notwithstanding, this fact will not be used in our arguments.
\end{rem}


\section{Existence of Ground States}\label{sec:existence}
In this \lcnamecref{sec:existence}, we prove the following \lcnamecref{thm:gsex}.
\begin{thm}\label{thm:gsex}
	If $\abs P<  \sfc$, then $\inf \sigma(H(P))$ is an eigenvalue of $H(P)$.
\end{thm}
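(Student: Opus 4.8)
The plan is to follow the now-standard compactness strategy for proving existence of ground states in infrared-regular massless models, as outlined in the introduction, adapting it to the renormalized operator $H(P)=H_{0,\infty}(P)$. First I would introduce the infrared regularization: for $\kappa>0$ the operator $H_{\kappa,\infty}(P)=H_{0,\infty}(P)+\kappa N$ from \cref{lem:core} is a massive polaron model (the effective dispersion is $\omega_\kappa=\omega+\kappa\ge\kappa>0$), so by the classical existence results for massive models one obtains a normalized ground state $\psi_\kappa$ with $H_{\kappa,\infty}(P)\psi_\kappa = E_\kappa(P)\psi_\kappa$, where $E_\kappa(P)=\inf\sigma(H_{\kappa,\infty}(P))$. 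One should check that $E_\kappa(P)\to E(P)=\inf\sigma(H(P))$ as $\kappa\downarrow 0$ — this follows from monotonicity in $\kappa$ together with norm resolvent convergence, or from a direct variational argument using a trial state. The goal is then to show that $(\psi_\kappa)_{\kappa>0}$ has a subsequence converging strongly (along $\kappa\to 0$) to a nonzero limit $\psi_0$, which will necessarily satisfy $H(P)\psi_0 = E(P)\psi_0$.

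The heart of the argument has two ingredients. The first is a uniform bound on the number of excitations: one must show $\sup_{\kappa>0}\langle\psi_\kappa, N\psi_\kappa\rangle<\infty$ and, more refined, a uniform bound on $\int \|a_k\psi_\kappa\|^2\,\d k$ with an integrable weight that does not degenerate at $k=0$. This is where infrared regularity ($v/\omega\in L^2$) enters decisively: applying $a_k$ to the eigenvalue equation and using the pull-through formula gives $a_k\psi_\kappa = -(H_{\kappa,\infty}(P-k)+\omega_\kappa(k)-E_\kappa(P))^{-1}(v_\Lambda\text{-type terms})\psi_\kappa$, and one estimates the right-hand side. The key point is a lower bound of the form $H_{\kappa,\infty}(P-k)-E_\kappa(P) \ge -C|k|$ (or the sharper binding-type inequality $E_\kappa(P-k)-E_\kappa(P)\ge -\omega_\kappa(k) + \delta$ for a suitable region), so that the resolvent is controlled by $\omega(k)^{-1}$ and the contribution is square-integrable near the origin. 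For the renormalized model the "$v_\Lambda$-type terms" are not simply $a(v_\infty)$ but involve the operators $G_\infty$, $T_\infty$, $\widetilde T_\infty$; one works at finite $\Lambda$, derives the bounds with $\Lambda$-uniform constants using the estimates \eqref{eq:Tbounds} and the resolvent identities from \cref{sec:renormalization}, and passes to the limit. The second ingredient is a compactness/tightness statement converting the weight bound into strong convergence: modulo the standard fact that the unit ball of $\cF$ is not compact, one uses the weighted $N$-bound to show that the "tail" of the Fock space and the high-momentum/large-$|k|$ parts carry vanishing mass uniformly in $\kappa$, giving relative compactness of $(\psi_\kappa)$ in $\cF$.

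The final step is to exclude $\psi_0=0$, i.e. to rule out that all the mass escapes to infinity in Fock space. This is exactly the point where the hypothesis $|P|<\sfc$ is used: if $\psi_\kappa\rightharpoonup 0$ weakly, then a standard argument (splitting off a low-energy excitation) shows $E(P)\ge \inf_k\big(E(P-k)+\omega(k)\big)$, and since $E(P-k)\ge E(0) - \tfrac12|P-k|\cdot|\text{something}|$ — more carefully, using $E(P')\ge E(0)$ is false, so one needs $E(P-k)\ge E(P) - $ (a quantity dominated by $\omega(k)$ when $|P|<\sfc$ because $\omega(k)\ge \sfc|k|$) — one derives a contradiction with strict inequality. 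Concretely the mechanism is: the cost of emitting a phonon of momentum $k$ is $\omega(k)\ge\sfc|k|$, while the kinetic energy the impurity can shed by changing its momentum from $P$ to $P-k$ is at most roughly $|P||k|$ to leading order, so if $|P|<\sfc$ emission is strictly energetically disfavored and the infimum defining the essential spectrum threshold is not attained, forcing the minimizing sequence to stay in a compact set. I expect \textbf{this last non-vanishing step, together with establishing the $\Lambda$-uniform weighted $N$-bound for the renormalized operator}, to be the main obstacle: the renormalization means $a_k\psi_\kappa$ is not given by a clean explicit formula, so controlling it uniformly in both $\kappa$ and $\Lambda$ while keeping the $k\to 0$ behavior sharp enough to close the argument is delicate, and the sharp-constant comparison of $E(P-k)$ with $E(P)$ that makes $\sfc$ the precise threshold requires care with the quadratic-form domains (noting, as the remark after \cref{lem:core} points out, that these domains genuinely depend on $P$).
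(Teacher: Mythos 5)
Your overall skeleton---introduce the mass $\kappa N$, use the pull-through formula for $a_k\psi_\kappa$, and let $\abs P<\sfc$ enter through $\omega(k)\ge\sfc\abs k$ versus the momentum cost $\abs P\abs k$---matches the paper, and your infrared estimate $\|a_k\psi_\kappa\|\lesssim \abs{v(k)}/((\sfc-\abs P)\abs k)$ is exactly the first half of \cref{lem:gscom}. But there are two genuine gaps. First, the compactness step as you state it fails: a uniform weighted bound on $\int\|a_k\psi_\kappa\|^2\,\d k$ together with tail bounds in the particle number and at large $\abs k$ gives tightness, not relative compactness in $\cF$. Within each $n$-particle sector the Riesz--Kolmogorov criterion also demands uniform $L^2$-continuity under translations in $k$; without it, purely oscillatory families (multiply a one-particle component by $e^{\i\lambda\cdot k}$ with $\lambda\to\infty$) satisfy all of your bounds and have no convergent subsequence. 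The paper supplies precisely this missing ingredient through the second condition in \cref{eq:Gset}, namely $\|(a_{k+p}-a_k)\psi\|\le r\abs p/\abs k^2$, which is derived from the resolvent identity \cref{eq:pullthroughdiff}; nothing in your plan produces such a bound.

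Second, your fallback non-vanishing argument cannot close. For the massless dispersion one has $\inf_k\big(E(P-k)+\omega(k)\big)=E(P)$ (let $k\to0$), i.e.\ the ground state energy is always embedded in the essential spectrum; the inequality you propose to contradict is therefore true for every $P$, and no contradiction with $\abs P<\sfc$ is available at that level. The hypothesis is instead used quantitatively: through \cref{lem:convex} (itself a nontrivial input, proved via Gross's inequality and convexity below the parabola --- you assert but do not prove it) and the HVZ theorem \cref{prop:HVZ} it yields a spectral gap of size $\kappa$ for the massive cutoff operators, so note that even the existence of your $\psi_\kappa$ at fixed total momentum is not ``classical for all $P$'' but rests on this gap and hence already uses $\abs P\le\sfc$; and it makes the resolvent bound \cref{eq:resbound}, hence the uniform membership $\psi_\kappa\in\overline{\sG_r}$, possible. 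Once one has norm compactness (\cref{lem:compact,lem:gscom}), the subsequential limit is automatically normalized and minimizes the quadratic form, so no separate non-vanishing step is needed at all. Finally, your concern about the renormalized pull-through is resolved in the paper not by estimating $G_\infty$, $T_\infty$, $\widetilde T_\infty$ inside the pull-through, but by proving the bounds at finite $\Lambda$ with constants uniform in $(\kappa,\Lambda)$ and transporting them to $\Lambda=\infty$ via norm resolvent convergence together with the uniform gap of \cref{cor:massiveexistence} (convergence of spectral projections); that step should be made explicit in your plan.
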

To prove the statement, we approximate $H(P)=H_{0,\infty}(P)$ by the infrared regularized Hamiltonians $H_{\kappa,\infty}(P)$ with $\kappa>0$ introduced in \cref{def:HkL}.
Further, we write
\begin{equation}
	E_{\kappa,\Lambda}(P) \coloneqq \inf \sigma(H_{\kappa,\Lambda}(P)) \qquad\text{for all $\kappa\ge0$, $\Lambda\in\R_+\cup\{\infty\}$, $P\in\IR^3$}.
\end{equation}
Let us first observe that the ground state energies converge, when removing any regularization.
\begin{lem}\label{lem:gsconv}
	For any fixed $\kappa\ge 0$, $\Lambda\in\R_+\cup\{\infty\}$ and $P\in\IR^3$, we have
	\[ E_{\kappa,\infty} = \lim_{\sigma\to \infty}E_{\kappa,\sigma}
		\qquad\mbox{and}\qquad
		E_{0,\Lambda} = \lim_{\eta\downarrow 0} E_{\eta,\Lambda}.
	 \]
\end{lem}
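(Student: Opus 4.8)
The plan is to obtain both statements from norm resolvent convergence together with a uniform lower bound, via an elementary soft argument. For the first limit, \cref{thm:ren} gives $H_{\kappa,\sigma}(P)\to H_{\kappa,\infty}(P)$ in norm resolvent sense as $\sigma\to\infty$, and norm resolvent convergence implies convergence of the infimum of the spectrum whenever the spectra stay uniformly bounded below; the latter holds because the lower bounds produced in the proof of \cref{thm:ren} are uniform in $\sigma$ (the constants $C_\delta$ there do not depend on the cutoff). Concretely, I would note that for any $a$ below all the spectra, $\|(H_{\kappa,\sigma}(P)-a)^{-1}\|\to\|(H_{\kappa,\infty}(P)-a)^{-1}\|$ forces $\operatorname{dist}(a,\sigma(H_{\kappa,\sigma}(P)))\to \operatorname{dist}(a,\sigma(H_{\kappa,\infty}(P)))$, and since $a$ is below the spectrum this distance equals $E_{\kappa,\sigma}(P)-a$, respectively $E_{\kappa,\infty}(P)-a$. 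This yields $E_{\kappa,\sigma}(P)\to E_{\kappa,\infty}(P)$.

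For the second limit, $E_{0,\Lambda}(P)=\lim_{\eta\downarrow 0}E_{\eta,\Lambda}(P)$, I would argue slightly differently to exploit monotonicity. Since $N\ge 0$, for $\Lambda<\infty$ the map $\eta\mapsto H_{\eta,\Lambda}(P)=H_{0,\Lambda}(P)+\eta N$ is monotone increasing in $\eta$ in the form sense, hence $\eta\mapsto E_{\eta,\Lambda}(P)$ is nondecreasing and $E_{\eta,\Lambda}(P)\ge E_{0,\Lambda}(P)$, so $L:=\lim_{\eta\downarrow 0}E_{\eta,\Lambda}(P)$ exists and satisfies $L\ge E_{0,\Lambda}(P)$. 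For the reverse inequality, pick a normalized $\psi\in\cD(N)\cap\cD(H_{0,\Lambda}(P))$ with $\langle\psi,H_{0,\Lambda}(P)\psi\rangle\le E_{0,\Lambda}(P)+\epsilon$; such a $\psi$ exists because, by \cref{lem:core}, $\cD(N)\cap\cD(H_{0,\infty}(P))$ is a core for $H_{0,\infty}(P)$ (for $\Lambda<\infty$ this is immediate since $\cD(N)\cap\cD(H_{0,\Lambda}(P))$ is already a form core). Then $E_{\eta,\Lambda}(P)\le\langle\psi,H_{0,\Lambda}(P)\psi\rangle+\eta\langle\psi,N\psi\rangle\le E_{0,\Lambda}(P)+\epsilon+\eta\|N^{1/2}\psi\|^2$, and letting $\eta\downarrow 0$ and then $\epsilon\downarrow 0$ gives $L\le E_{0,\Lambda}(P)$. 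For $\Lambda=\infty$ the same variational estimate works verbatim using that $H_{\eta,\infty}(P)=H_{0,\infty}(P)+\eta N$ on $\cD(N)\cap\cD(H_{0,\infty}(P))$ by \cref{lem:core}, so the trial state $\psi$ from the core satisfies $\langle\psi,H_{\eta,\infty}(P)\psi\rangle=\langle\psi,H_{0,\infty}(P)\psi\rangle+\eta\|N^{1/2}\psi\|^2$.

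The main obstacle, such as it is, is bookkeeping about uniformity: one must make sure that the lower bounds on $H_{\kappa,\sigma}(P)$ used in the first limit are genuinely $\sigma$-independent (which they are, as the renormalization estimates in \cref{eq:Tbounds} and the subsequent bounds on $\widetilde T_\Lambda$, $R_\Lambda$ come with constants independent of the cutoff), and that for $\Lambda=\infty$ one has a trial vector in $\cD(N)\cap\cD(H_{0,\infty}(P))$ rather than merely in the form domain — this is exactly what \cref{lem:core} supplies. No compactness or spectral-gap input is needed here; the real work of the paper is deferred to showing that these energies are actually attained, which is the content of the subsequent argument rather than of this lemma.
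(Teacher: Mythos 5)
Your proposal is correct and follows essentially the same route as the paper: the ultraviolet limit is deduced from the norm resolvent convergence of \cref{thm:ren} (where the paper simply cites a standard reference, while you spell out the distance-to-spectrum argument together with the uniform-in-cutoff lower bound it requires), and the infrared limit is obtained by the same variational trial-state argument on the core $\cD(N)\cap\cD(H_{0,\Lambda}(P))$ furnished by Kato--Rellich for $\Lambda<\infty$ and by \cref{lem:core} for $\Lambda=\infty$, using $H_{\eta,\Lambda}(P)=H_{0,\Lambda}(P)+\eta N\ge H_{0,\Lambda}(P)$.
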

\begin{proof}
	The first statement is a consequence of the norm resolvent convergence established in \cref{thm:ren} (cf. \cite{Oliveira.2009}).
	 For the second statement, we observe that $\cD(N)\cap \cD(H_{0,\Lambda}(P))$ is a core for $H_{0,\Lambda}(P)$, by the Kato--Rellich theorem for $\Lambda<\infty$ and by \cref{lem:core} for $\Lambda=\infty$. Hence, picking any $\eps>0$, there exists $\ph_\eps\in\cD(N)\cap \cD(H_{0,\Lambda}(P))$ with $\|\ph_\eps\|=1$ such that $\braket{\ph_\eps,H_{0,\Lambda}(P)\ph_\eps}<E_{0,\Lambda}(P)+\eps$.
	Further employing that $H_{\eta,\Lambda}(P)-H_{0,\Lambda}(P)\ge 0$ (as a form inequality), again by \cref{def:HkL,lem:core}, we find
	\begin{align*}
		E_{0,\Lambda}(P) \le E_{\eta,\Lambda}(P) \le \braket{\ph_\eps,H_{\eta,\Lambda}(P)\ph_\eps} = \braket{\ph_\eps,H_{0,\Lambda}(P)\ph_\eps} + \eta\braket{\ph_\eps,N\ph_\eps} \le E_{0,\Lambda} + \eps + \eta\braket{\ph_\eps,N\ph_\eps}.
	\end{align*}
	First taking $\eta\downarrow0$ and then $\eps\downarrow 0$ finishes the proof.
\end{proof}
The mass term $\kappa N$ ensures the existence of a spectral gap for small enough $P$, as a consequence of the following well-known HVZ-type theorem \cite{Frohlich.1973,Moller.2005}.
\begin{prop}[{\cite[Theorem 1.2]{Moller.2005}}]\label{prop:HVZ}
	For all $\kappa> 0$, $\Lambda\in\R_+$, we have
		\begin{equation}\label{eq:HVZ}
		\inf \sigma_{\sfe\sfs\sfs}(H_{\kappa,\Lambda}(P)) = \inf_{\substack{k_1,\ldots,k_n\in\IR^3\\n\in\IN}} \big[E_{\kappa,\Lambda}(P-k_1-\cdots-k_n) + \omega(k_1)+\cdots \omega(k_n) + n\kappa\big].
	\end{equation}
\end{prop}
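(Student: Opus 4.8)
The plan is to observe that \eqref{eq:HVZ} is exactly \cite[Theorem~1.2]{Moller.2005} applied to the dispersion relation $\omega_\kappa=\omega+\kappa$ and the form factor $v_\Lambda$, so the real content of the proof is to check the hypotheses of that theorem for the present data. Since $\Lambda<\infty$ we have $v_\Lambda\in L^2(\IR^3)$ and $\omega_\kappa^{-1/2}v_\Lambda\in L^2(\IR^3)$, so $H_{\kappa,\Lambda}(P)$ is the Kato--Rellich operator $\tfrac12(P-\dG(\hat p))^2+\dG(\omega_\kappa)+\ph(v_\Lambda)$ up to the finite renormalization constants, and $\omega_\kappa$ is continuous, tends to infinity, and satisfies $\omega_\kappa\ge\kappa>0$; the last property is the mass gap on which the whole statement rests (for $\kappa=0$ one only has $\inf\sigma(H(P))=\inf\sigma_{\mathrm{ess}}(H(P))$, which is why $\kappa>0$ is assumed here). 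Granting this, one reproduces Møller's argument in two halves, which I sketch for orientation.

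For ``$\le$'' I would show that each number $E_{\kappa,\Lambda}(P-K)+\sum_{i=1}^n\omega(k_i)+n\kappa$, with $K=k_1+\dots+k_n$, lies in $\sigma_{\mathrm{ess}}(H_{\kappa,\Lambda}(P))$, by constructing a singular Weyl sequence. Take $(\psi_j)_j$ normalized in the operator domain with $\|(H_{\kappa,\Lambda}(P-K)-E_{\kappa,\Lambda}(P-K))\psi_j\|\to 0$, pick normalized $g_1,\dots,g_n$ concentrated in momentum near $k_1,\dots,k_n$, and set $\Phi_j\propto a^*(h_1^{(j)})\cdots a^*(h_n^{(j)})\psi_j$ where $h_i^{(j)}(k)=\e^{-\i y_j\cdot k}g_i(k)$ translates the $i$-th added phonon to a spatial position $y_j$ with $\abs{y_j}\to\infty$. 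The overlap $\langle v_\Lambda,h_i^{(j)}\rangle\to 0$ by Riemann--Lebesgue, so the added phonons decouple from $\ph(v_\Lambda)$; commuting $\dG(\hat p)$ and $\dG(\omega_\kappa)$ through the added creation operators and using the momentum concentration of the $g_i$, the kinetic term reassembles into $\tfrac12((P-K)-\dG(\hat p))^2$ acting on the $\psi_j$-slot, and one finds $H_{\kappa,\Lambda}(P)\Phi_j=\big(E_{\kappa,\Lambda}(P-K)+\sum_i\omega(k_i)+n\kappa\big)\Phi_j+o(1)$ with $\Phi_j\rightharpoonup 0$. Letting the widths of the $g_i$ shrink and taking the infimum over $n$ and the $k_i$ gives ``$\le$''.

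For ``$\ge$'' let $\Sigma$ denote the right-hand side of \eqref{eq:HVZ}, put $\lambda=\inf\sigma_{\mathrm{ess}}(H_{\kappa,\Lambda}(P))$ with a singular Weyl sequence $(\Psi_j)$, and aim at $\lambda\ge\Sigma$. First I would use $\dG(\omega_\kappa)\ge\kappa N$ and the relative form bound $\pm\ph(v_\Lambda)\le\tfrac\kappa2 N+C$ to get $H_{\kappa,\Lambda}(P)\ge\tfrac\kappa2 N-C'$, hence a uniform bound $\|N^{1/2}\Psi_j\|\le C''$, which reduces the problem to uniformly bounded particle number. On such vectors I would insert a smooth partition of unity $j_R^2+(j_R^\infty)^2=1$ in phonon configuration space, lifted to Fock space, and an IMS-type localization formula $H_{\kappa,\Lambda}(P)=\check\Gamma(j_R,j_R^\infty)^*\big(H_{\kappa,\Lambda}(P)\otimes 1+1\otimes\dG(\omega_\kappa)\big)\check\Gamma(j_R,j_R^\infty)+o_R(1)$: the part with all phonons in a fixed ball lives in a precompact set and cannot carry a Weyl sequence, whereas any part with $m\ge 1$ phonons escaping to infinity carries, by the bookkeeping of the previous paragraph read backwards, energy at least $E_{\kappa,\Lambda}(P-K')+\sum_{i=1}^m\omega(k_i)+m\kappa\ge\Sigma$ up to $o(1)$. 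Putting the cases together yields $\liminf_j\langle\Psi_j,H_{\kappa,\Lambda}(P)\Psi_j\rangle\ge\Sigma$, hence $\lambda\ge\Sigma$, and with ``$\le$'' this proves \eqref{eq:HVZ}.

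I expect the IMS-type localization in the last step to be the main obstacle: the kinetic term $\tfrac12(P-\dG(\hat p))^2$ is quadratic in $\dG(\hat p)$ and couples all phonons, so controlling its commutators with the spatial cutoffs $j_R,j_R^\infty$ and showing that the localization error truly vanishes as $R\to\infty$ on states of bounded particle number and energy is delicate. This is precisely what \cite{Moller.2005} carries out for this class of fiber Hamiltonians; here the situation is if anything simpler, since $\Lambda<\infty$ removes any ultraviolet issue and $\kappa>0$ provides the spectral gap used throughout.
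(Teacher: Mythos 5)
Your proposal is correct and coincides with the paper's approach: the paper gives no independent argument but simply invokes \cite[Theorem~1.2]{Moller.2005} for the massive fiber Hamiltonian $H_{\kappa,\Lambda}(P)$ with $\omega_\kappa=\omega+\kappa\ge\kappa>0$ and $v_\Lambda\in L^2(\IR^3)$, which is exactly the hypothesis check you perform. Your sketch of the Weyl-sequence and IMS-localization halves is just a recap of Møller's own proof and is not needed beyond the citation.
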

In view of the above \lcnamecref{prop:HVZ}, we need to estimate the difference $E(P-k)-E(P)$.
This can be done using simple convexity arguments, cf. \cite{LossMiyaoSpohn.2007,HaslerHinrichsSiebert.2023}.
\begin{lem}\label{lem:convex}
	Let $\kappa\ge 0$, $\Lambda\in\R_+\cup\{\infty\}$ and $P,K\in\IR^3$. Then
	\begin{equation*}
		E_{\kappa,\Lambda}(P-K)-E_{\kappa,\Lambda}(P)
		\ge - |K||P|.
	\end{equation*}
\end{lem}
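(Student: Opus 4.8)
The plan is to exploit the rotation invariance of the model together with convexity of $P\mapsto E_{\kappa,\Lambda}(P)$ in a fixed direction. First I would record two structural facts about the Hamiltonians $H_{\kappa,\Lambda}(P)$. The first is rotation covariance: for any $R\in\mathrm{SO}(3)$ there is a unitary $U_R$ on $\cF$ (the second quantization of the rotation acting on $L^2(\IR^3)$) with $U_R H_{\kappa,\Lambda}(P) U_R^* = H_{\kappa,\Lambda}(RP)$, because $\omega$, $v_\Lambda$ and $\tfrac12(P-\dG(\hat p))^2$ are all built from rotation-invariant ingredients (and the counterterms $\Sigma_\Lambda^{(i)}$ are $P$-independent scalars). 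Hence $E_{\kappa,\Lambda}(P)$ depends only on $|P|$; write $E_{\kappa,\Lambda}(P)=e(|P|)$. The second fact is convexity: for fixed unit vector $\hat n$ and $t\in\IR$, the map $t\mapsto H_{\kappa,\Lambda}(t\hat n)$ is, after the standard completion-of-the-square rewriting $\tfrac12(t\hat n-\dG(\hat p))^2 = \tfrac12\dG(\hat p)^2 - t\,\hat n\cdot\dG(\hat p) + \tfrac12 t^2$, an operator of the form $A - tB + \tfrac12 t^2$ with $A,B$ self-adjoint and $t$-independent; therefore $t\mapsto E_{\kappa,\Lambda}(t\hat n) - \tfrac12 t^2 = \inf\sigma(A-tB)$ is concave minus... wait, it is an infimum of affine functions of $t$, hence concave; equivalently $t\mapsto E_{\kappa,\Lambda}(t\hat n)$ differs from a concave function by $\tfrac12 t^2$. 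Combined with the reflection symmetry $P\mapsto -P$ (take $R=-\mathrm{Id}$, or $\hat n\mapsto -\hat n$), which forces $e$ to be even and hence $e$ to have its relevant behavior governed by $|P|$, one gets that $r\mapsto e(r) - \tfrac12 r^2$ is concave and even on $\IR$, so in particular $e$ is differentiable a.e. with $e'(r)\le r$ for $r\ge 0$ (the concave-plus-parabola function has a decreasing derivative shifted by $r$, and evenness pins the value at $0$).

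With these two ingredients the estimate is a one-line computation. For general $P,K$, set $r=|P|$ and $r'=|P-K|$; by the reverse triangle inequality $|r'-r|\le |K|$. Using $E_{\kappa,\Lambda}(P-K)-E_{\kappa,\Lambda}(P) = e(r') - e(r) = \int_r^{r'} e'(s)\,\d s$ and the bound $|e'(s)|\le \max(s,\text{something})$... more carefully: since $s\mapsto e(s)-\tfrac12 s^2$ is concave and even, its derivative $e'(s)-s$ is nonincreasing and odd, hence $e'(s)-s\le e'(0)-0 = 0$ for $s\ge 0$ and $e'(s)-s\ge 0$ for $s\le 0$; in either case $|e'(s)|\le |s|$ for... no — for $s\ge0$ we get $e'(s)\le s$ but also $e'(s)\ge e'(r')$ if $s\le r'$, which is not obviously bounded below by $-s$. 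The cleaner route: concavity of $s\mapsto e(s)-\tfrac12 s^2$ gives $e(r')-\tfrac12 r'^2 \le e(r)-\tfrac12 r^2 + (e'(r)-r)(r'-r)$, and since $e'(r)-r\le 0$ for $r\ge 0$, if $r'\ge r$ this already yields $e(r')-e(r)\le \tfrac12(r'^2-r^2) = \tfrac12(r'-r)(r'+r)$, which is the wrong sign of inequality for our purpose; but we want a lower bound on $e(r')-e(r)$, so I should instead use concavity the other way, bounding $e(r)-\tfrac12 r^2 \le e(r')-\tfrac12 r'^2 + (e'(r')-r')(r-r')$, i.e. $e(r')-e(r)\ge \tfrac12(r'^2-r^2) + (e'(r')-r')(r'-r)$. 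This still needs control of $e'(r')$. So the honest plan is: show $|e'(s)|\le C_{\text{loc}}$ fails in general, meaning one genuinely needs the parabola-subtraction estimate; but the stated bound $-|K||P|$ suggests using only the one-sided convexity that directly gives, via the subgradient inequality at the point $P$ in the direction toward $P-K$, $E_{\kappa,\Lambda}(P-K)-E_{\kappa,\Lambda}(P)\ge \langle \nabla E_{\kappa,\Lambda}(P), -K\rangle \ge -|\nabla E_{\kappa,\Lambda}(P)||K| \ge -|P|\,|K|$, where the last step is the bound $|\nabla E_{\kappa,\Lambda}(P)|\le |P|$ coming from $e'(|P|)\le |P|$ together with the radial formula $\nabla E_{\kappa,\Lambda}(P)=e'(|P|)\,P/|P|$ (and the gradient exists at $P$ by convexity-up-to-a-parabola wherever $e$ is differentiable; at non-differentiability points one works with a subgradient, every element of which still has norm $\le|P|$).

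Thus the key steps, in order, are: (1) establish the unitary rotation covariance $U_R H_{\kappa,\Lambda}(P)U_R^* = H_{\kappa,\Lambda}(RP)$ and deduce $E_{\kappa,\Lambda}(P)=e(|P|)$; (2) via completing the square, write $H_{\kappa,\Lambda}(t\hat n) = A_{\hat n} - tB_{\hat n} + \tfrac12 t^2$ and conclude $t\mapsto E_{\kappa,\Lambda}(t\hat n)-\tfrac12 t^2$ is concave; (3) combine with evenness to get $e'(r)\le r$ for a.e. $r\ge 0$ (equivalently, every subgradient $\zeta$ of $E_{\kappa,\Lambda}$ at $P$ satisfies $|\zeta|\le|P|$); (4) apply the subgradient (supporting hyperplane) inequality $E_{\kappa,\Lambda}(P-K)\ge E_{\kappa,\Lambda}(P)+\langle\zeta,-K\rangle$ and Cauchy--Schwarz. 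The main obstacle is step (2)–(3): one must be careful that the completion-of-the-square rewriting is valid at the operator/form level uniformly in $t$ (for $\Lambda<\infty$ this is immediate from Kato--Rellich since all operators share the domain $\cD(H_{\kappa,0})$; for $\Lambda=\infty$ one invokes \cref{thm:ren} and the norm-resolvent convergence $H_{\kappa,\Lambda}(t\hat n)\to H_{\kappa,\infty}(t\hat n)$, so that $E_{\kappa,\infty}(t\hat n)=\lim_{\Lambda\to\infty}E_{\kappa,\Lambda}(t\hat n)$ is a pointwise limit of concave-minus-parabola functions, hence concave minus parabola), and that the rotation covariance passes to $\Lambda=\infty$ (again through \cref{thm:ren}, since $U_R$ conjugates the approximating operators and resolvent limits are preserved). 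Once these limiting arguments are in place, the convexity estimate itself is elementary.
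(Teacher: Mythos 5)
Your reduction to a radial function and the observation that $t\mapsto E_{\kappa,\Lambda}(t\hat n)-\tfrac12 t^2$ is concave are fine (and your limiting arguments in $\Lambda$ and $\kappa$ match what the paper does via \cref{lem:gsconv}), but the core of the argument has a genuine gap, in two places. First, your final step invokes the subgradient inequality $E_{\kappa,\Lambda}(P-K)\ge E_{\kappa,\Lambda}(P)+\langle\zeta,-K\rangle$, which is the supporting-hyperplane inequality for a \emph{convex} function; $E_{\kappa,\Lambda}$ is not known to be convex. What you actually established is that $\tfrac12 P^2-E_{\kappa,\Lambda}(P)$ is convex, i.e.\ $E_{\kappa,\Lambda}$ is semi-concave, and for such a function the tangent plane lies \emph{above} the graph (up to $\tfrac12|K|^2$), so it yields an upper bound on $E_{\kappa,\Lambda}(P-K)$, not the lower bound you need. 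Second, the claimed bound $|\nabla E_{\kappa,\Lambda}(P)|\le|P|$ does not follow from rotation invariance plus semi-concavity: evenness and concavity of $h(t)=e(t)-\tfrac12 t^2$ give only the one-sided bound $e'(r)\le r$, and nothing you derive prevents $e'(r)$ from being very negative. Concretely, $e(r)=\tfrac12 r^2-Ar$, i.e.\ $E(P)=\tfrac12|P|^2-A|P|$, satisfies every property you establish (rotation invariance, concavity and evenness of $h$, $e'(r)=r-A\le r$), yet for $K=-tP/|P|$ one finds $E(P-K)-E(P)=|P|t+\tfrac12 t^2-At<-|K|\,|P|$ as soon as $A>2|P|+\tfrac12 t$. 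So some further model-specific input is indispensable.

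The paper supplies exactly this missing input: the Gross-type inequality $E_{\kappa,\Lambda}(P)\ge E_{\kappa,\Lambda}(0)$ together with the upper bound $E_{\kappa,\Lambda}(P)-E_{\kappa,\Lambda}(0)\le\tfrac12|P|^2$, the latter proved by using a ground state $\psi_0$ of $H_{\kappa,\Lambda}(0)$ as a trial state (its existence rests on the HVZ theorem, \cref{prop:HVZ}, and the gap $\kappa$) and showing $\braket{\psi_0,\dG(\hat p)\psi_0}=0$. With $0\le E_{\kappa,\Lambda}(P)-E_{\kappa,\Lambda}(0)\le\tfrac12|P|^2$ and convexity of $\tfrac12P^2-E_{\kappa,\Lambda}(P)$ in hand, the conclusion is obtained not from a subgradient inequality for $E$ but from a dedicated convex-analysis lemma about convex functions pinched between $0$ and the parabola, cf.\ \cite[Appendix~A]{LossMiyaoSpohn.2007} and \cite[Cor.~A.6]{HaslerHinrichsSiebert.2023}; note that even the two-sided bound $|e'(r)|\le r$ would only give $-|K|\,|P|-\tfrac12|K|^2$ by direct integration when $|P-K|>|P|$, so that lemma (or an equivalent refinement) is really needed. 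If you add the nonnegativity and parabola bounds and then invoke such a result, your radial/concavity framework can be completed; as written, the proposal does not prove the statement.
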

\begin{proof}
	First, we assume that $\kappa>0$ and $\Lambda<\infty$ and prove the inequalities
	\begin{equation}\label{eq:massshellinequ}
		0 \le E_{\kappa,\Lambda}(P) - E_{\kappa,\Lambda}(0) \le  \tfrac 12 \abs P^2 \quad\text{for all $P\in\IR^3$.}
	\end{equation}
	 The first inequality goes back to Gross \cite{Gross.1972}, see \cite[Lemma~3.4]{Hinrichs.2022} for a recent adaption which covers our case.
	 Now, note that \cref{prop:HVZ} combined with the first inequality
	 yields
	 \begin{equation}
	 \inf \sigma_\mathrm{ess}(H_{\kappa,\Lambda}(0)) \ge E_{\kappa,\Lambda}(0) + \kappa,
	 \end{equation}
so $E_{\kappa,\Lambda}(0) $ is a discrete eigenvalue with corresponding normalized eigenvector $\psi_0\in\cD(H_{\kappa,\Lambda}(0))=\cD(H_{\kappa,\Lambda}(P))$. Then
	 \[ E_{\kappa,\Lambda}((P)) \le \braket{\psi_0,H_{\kappa,\Lambda}(P)\psi_0} = E_{\kappa,\Lambda}(0) + \tfrac 12\abs P^2  + \braket{\psi_0,P\cdot \dG(\hat p) \psi_0}. \]
	This implies $\tfrac 12\abs P^2  + \braket{\psi_0,P\cdot \dG(\hat p) \psi_0}\ge 0$ for all $P\in\IR^3$. Letting $P\to 0$, this yields $e\cdot\braket{\psi_0,\dG(\hat p)\psi_0} \ge 0$ for all normalized $e\in\IR^3$,
	whence $\braket{\psi_0,\dG(\hat p)\psi_0}=0$.
	This proves the upper bound in \cref{eq:massshellinequ}.
	 
	Clearly, the map
	\begin{equation}
	P\mapsto \tfrac 12P^2-E_{\kappa,\Lambda}(P) = - \inf_{\psi\in \cD(H_{\kappa,\Lambda}(P))} \langle \psi,  (-P\cdot \dG(\hat p)+ \tfrac12\dG(\hat p)^2 + H_{\kappa,\Lambda}(0) )\psi\rangle
	\end{equation}
is convex, as a supremum over linear functions of $P$.
	By a general result on convex functions taking nonnegative values below the standard parabola (essentially the fact that such a function must lie below any segment that intersects its graph and is tangent to the parabola, cf. \cite[Appendix~A]{LossMiyaoSpohn.2007} or \cite[Cor.~A.6]{HaslerHinrichsSiebert.2023}), this gives for $\kappa>0$, $\Lambda<\infty$
	\begin{equation}
		E_{\kappa,\Lambda}(P-K)-E_{\kappa,\Lambda}(P)
		\ge
		\begin{cases}
			- \abs K \abs P +\frac 12 \abs K^2 & \mbox{if}\ \abs K\le \abs P,\\
			-\frac 12 \abs P^2 & \mbox{if}\ \abs K>\abs P.
		\end{cases}
	\end{equation}
	In both cases the right hand side is larger than $-|K||P|$ as claimed.
	This proves the claim for $\kappa>0$ and $\Lambda<\infty$.
	The general statement follows from the convergence results of \cref{lem:gsconv}.
\end{proof}

\begin{cor}\label{cor:massiveexistence}
	If $\abs P\le \sfc$, then $E_{\kappa,\Lambda}(P)$ is a discrete eigenvalue of $H_{\kappa,\Lambda}(P)$ for all $\kappa>0$ and $\Lambda\in\R_+\cup\{\infty\}$.
\end{cor}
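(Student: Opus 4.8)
The strategy is to exhibit a spectral gap above the ground state energy, namely to prove
\begin{equation*}
	\inf \sigma_{\sfe\sfs\sfs}(H_{\kappa,\Lambda}(P)) \ge E_{\kappa,\Lambda}(P) + \kappa \qquad\text{for all } \kappa>0,\ \Lambda\in\R_+\cup\{\infty\},\ \abs P\le\sfc.
\end{equation*}
Since $H_{\kappa,\Lambda}(P)$ is selfadjoint and bounded below, $E_{\kappa,\Lambda}(P)=\inf\sigma(H_{\kappa,\Lambda}(P))$ lies in its spectrum, and the displayed bound then places it strictly below $\sigma_{\sfe\sfs\sfs}$, so it is an isolated point of the spectrum of finite multiplicity, i.e.\ a discrete eigenvalue. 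The only elementary facts I will add to the machinery already developed are the pointwise bound $\omega(k)\ge\sfc\abs k$, immediate from \eqref{def:omega}, and the triangle inequality in $\R^3$.

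Consider first $\Lambda<\infty$. I start from the HVZ formula of \cref{prop:HVZ} and bound each term in the infimum on its right-hand side from below. Fix $n\ge1$ and $k_1,\dots,k_n\in\R^3$, and write $K=k_1+\dots+k_n$. By \cref{lem:convex}, $E_{\kappa,\Lambda}(P-K)\ge E_{\kappa,\Lambda}(P)-\abs K\abs P\ge E_{\kappa,\Lambda}(P)-\sfc\abs K$, where the second step uses $\abs P\le\sfc$. On the other hand $\sum_{j=1}^n\omega(k_j)\ge\sfc\sum_{j=1}^n\abs{k_j}\ge\sfc\abs K$ by the triangle inequality, so that
\begin{equation*}
	E_{\kappa,\Lambda}(P-K)+\sum_{j=1}^n\omega(k_j)+n\kappa \ \ge\ E_{\kappa,\Lambda}(P)-\sfc\abs K+\sfc\abs K+n\kappa \ \ge\ E_{\kappa,\Lambda}(P)+\kappa.
\end{equation*}
Taking the infimum over all $n$ and all $k_1,\dots,k_n$ and invoking \cref{prop:HVZ} gives $\inf\sigma_{\sfe\sfs\sfs}(H_{\kappa,\Lambda}(P))\ge E_{\kappa,\Lambda}(P)+\kappa$, proving the claim for finite $\Lambda$.

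For $\Lambda=\infty$ I pass to the limit $\Lambda\to\infty$ in the inequality just established. By \cref{lem:gsconv}, $E_{\kappa,\Lambda}(P)\to E_{\kappa,\infty}(P)$; in particular there is a real $z$ below $\inf\sigma(H_{\kappa,\Lambda}(P))$ uniformly in $\Lambda\in\R_+\cup\{\infty\}$, and for such $z$ the positive bounded operators $(H_{\kappa,\Lambda}(P)-z)^{-1}$ converge in operator norm to $(H_{\kappa,\infty}(P)-z)^{-1}$ by the norm-resolvent convergence of \cref{thm:ren}. The essential spectrum of a bounded selfadjoint operator is Hausdorff-continuous in the operator norm — a point of $\sigma_{\sfe\sfs\sfs}$ carries infinite-rank spectral projections on arbitrarily small neighbourhoods, and these survive a small norm perturbation — and under the Möbius map $t\mapsto z+t^{-1}$ one has $\sup\sigma_{\sfe\sfs\sfs}\big((H_{\kappa,\Lambda}(P)-z)^{-1}\big)=\big(\inf\sigma_{\sfe\sfs\sfs}(H_{\kappa,\Lambda}(P))-z\big)^{-1}$. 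Hence $\inf\sigma_{\sfe\sfs\sfs}(H_{\kappa,\Lambda}(P))\to\inf\sigma_{\sfe\sfs\sfs}(H_{\kappa,\infty}(P))$, and passing to the limit in the finite-$\Lambda$ bound together with $E_{\kappa,\Lambda}(P)\to E_{\kappa,\infty}(P)$ gives $\inf\sigma_{\sfe\sfs\sfs}(H_{\kappa,\infty}(P))\ge E_{\kappa,\infty}(P)+\kappa$, which finishes the proof.

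Once \cref{prop:HVZ,lem:convex} are available, the finite-$\Lambda$ part is a one-line computation, so the only step that is not entirely mechanical is the passage to $\Lambda=\infty$, where \cref{prop:HVZ} does not apply directly; I deduce it from the stability of the bottom of the essential spectrum under norm-resolvent convergence. Alternatively, if an HVZ-type theorem for the renormalized operator $H_{\kappa,\infty}(P)$ is at hand, one may argue for $\Lambda=\infty$ verbatim as in the finite-cutoff case and bypass the limiting argument altogether.
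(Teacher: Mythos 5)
Your proposal is correct and follows essentially the same route as the paper: for $\Lambda<\infty$ it combines the HVZ formula of \cref{prop:HVZ} with \cref{lem:convex}, subadditivity of the norm and $\omega(k)\ge\sfc\abs k$ to get the gap $\kappa$, and for $\Lambda=\infty$ it passes to the limit using the norm resolvent convergence of \cref{thm:ren}. The only difference is that you spell out in detail why $\inf\sigma_{\sfe\sfs\sfs}$ and $E_{\kappa,\Lambda}(P)$ converge under norm resolvent convergence, a standard fact the paper simply invokes.
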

\begin{proof}
	First assume $\Lambda<\infty$.
	Combining  the HVZ Theorem,~\cref{prop:HVZ}, with  \cref{lem:convex} gives
	\begin{equation}\label{eq:E-difference}
	 \inf \sigma_{\sfe\sfs\sfs}(H_{\kappa,\Lambda}(P)) - E_{\kappa,\Lambda}(P) \ge
	  \inf_{\substack{k_1,\ldots,k_n\in\IR^3\\n\in\IN}} \bigg(\sum_{i=1}^{n} (\omega(k_i)+\kappa) - |P|\Big|\sum_{i=1}^{n} k_i\Big|  \bigg).
	\end{equation}
Since the absolute value is subadditive and $\omega(k)\ge \sfc \abs k$,
	  this is larger than $\kappa$ for $\abs P\le \sfc$, which proves the statement in the case $\Lambda<\infty$.
	The case $\Lambda=\infty$ directly follows from \cref{thm:ren}, since norm resolvent convergence implies convergence of $\inf \sigma_{\sfe\sfs\sfs}(H_{\kappa,\Lambda}(P))$ and $E_{\kappa,\Lambda}(P)$.
\end{proof}
We now identify a compact set in Fock space containing the (normalized) ground states of $H_{\kappa,\Lambda}(P)$.
To this end, we define
\begin{equation}\label{eq:Gset}
	\sG_{r} \coloneqq \Big\{ \psi\in\FS \colon \norm{a_k\psi }\le \frac{r}{\sqrt {\abs k} \vee \abs{k}^2},\ \norm{(a_{k+p}-a_k)\psi} \le
	\frac{r \abs p}{\abs k^2}
	\qquad \mbox{for a.e.}\ k,p\in\IR^3,\ \abs p \le \tfrac{1}{2}\abs k \Big\}.
\end{equation}

\begin{lem}\label{lem:compact}
	For all $r>0$, the set $\sG_{r}$ is pre-compact in $\cF$.
\end{lem}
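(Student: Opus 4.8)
The plan is to show that $\sG_r$ is precompact in $\cF$ by checking sequential precompactness sector by sector, using a Rellich--Kondrachov-type argument. Recall $\cF = \bigoplus_{n\ge 0} L^2_{\mathrm{sym}}((\IR^3)^n)$. Fix a sequence $(\psi^{(j)})_j \subset \sG_r$. The two defining conditions in~\eqref{eq:Gset} are stated pointwise in $k$ (and $p$), and they translate, on each $n$-particle sector, into a uniform bound on the $n$-particle component $\psi^{(j)}_n$ together with a uniform bound on a difference quotient in one of its variables. Concretely, writing out $\norm{a_k\psi}^2 = \sum_n n \int |\psi_n(k,k_2,\dots,k_n)|^2 \d k_2\cdots\d k_n$, the first bound gives $\int (|k|\wedge|k|^4)\,\norm{a_k\psi^{(j)}}^2\,\d k \le C r^2$, which provides: (a) decay of the $n$-particle norms, so that the total mass in sectors $n\ge M$ is uniformly small and it suffices to treat each fixed $n$; and (b) on a fixed sector, integrability of $|\psi_n|^2$ against the weight $|k_1|\wedge|k_1|^4$ in the first variable (and by symmetry in each variable), controlling both the infrared region $|k_1|\to 0$ and the ultraviolet region $|k_1|\to\infty$. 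The second bound, $\norm{(a_{k+p}-a_k)\psi}\le r|p|/|k|^2$, is a uniform Hölder/Lipschitz-type control of translations of $\psi_n$ in its first argument on any region bounded away from $k=0$.

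The main steps, in order, are: First I would reduce to a fixed sector by the tightness-in-$n$ estimate from bound (a), and then reduce to a fixed annulus $\delta \le |k_i| \le R$ for every $i$, since bound (a) makes the contribution of $\{|k_i|<\delta\}$ and $\{|k_i|>R\}$ uniformly small (here is where the infrared weight $|k|^{-1}$ near zero and the ultraviolet weight $|k|^{-4}$ at infinity are exactly what is needed). Second, on the compact region $\{\delta\le|k_i|\le R \ \forall i\}$, I would invoke the Fréchet--Kolmogorov--Riesz compactness criterion in $L^2$: uniform boundedness is clear, tightness was just arranged, and equicontinuity of translations follows from the translation bound~\eqref{eq:Gset} applied to each variable in turn (translating $k_1$ by $p$, then $k_2$ by $p$, etc., and summing the $n$ estimates, each of size $\le r|p|/\delta^2$). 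This yields precompactness of $\{\psi^{(j)}_n\}$ restricted to the annular region, hence a convergent subsequence; a diagonal argument over a sequence $\delta\downarrow 0$, $R\uparrow\infty$, $M\uparrow\infty$ produces a subsequence converging in $\cF$.

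The step I expect to be the main obstacle is the careful bookkeeping in passing from the distributional bounds on $a_k\psi$ and $(a_{k+p}-a_k)\psi$ to honest statements about the $L^2$ functions $\psi_n$ and their translates, keeping the constants uniform in $n$ and summable over $n$. In particular one must check that the translation bound survives summation over sectors — the factor $n$ from $\norm{a_k\psi}^2$ appears, so one needs the already-established decay of $\norm{\psi_n}$ in $n$ to close the estimate — and that translating a single variable of a symmetric function is legitimate (one can translate the ``first'' slot and use symmetry, or equivalently translate all slots simultaneously and telescope). Once the translation-equicontinuity and tightness are in place uniformly, the Fréchet--Kolmogorov argument is routine and the diagonal extraction is standard.
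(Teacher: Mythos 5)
Your plan is, in substance, the standard localization-plus-regularity compactness argument that the paper itself delegates to \cite[Theorem~3.4]{HaslerHinrichsSiebert.2023}: reduce to finitely many sectors via the number bound, truncate to an annulus in each variable, and apply Kolmogorov--Riesz equicontinuity using the translation estimate, with a diagonal extraction at the end; so the approach matches the paper's and the scheme goes through. Two corrections to the details. First, the displayed claim $\int(\abs k\wedge\abs k^4)\norm{a_k\psi}^2\,\d k\le Cr^2$ does not follow from \cref{eq:Gset}: under the pointwise bound the integrand is only $O(\abs k^{-3})$ at infinity, which is logarithmically divergent in $\R^3$. It is also not what you need: sector decay comes from $\braket{\psi,N\psi}=\int\norm{a_k\psi}^2\,\d k\le r^2\int(\abs k\vee\abs k^4)^{-1}\,\d k<\infty$, and the infrared/ultraviolet tightness is exactly the smallness of $\int_{\abs k<\delta}\norm{a_k\psi}^2\,\d k\le 2\pi r^2\delta^2$ and $\int_{\abs k>R}\norm{a_k\psi}^2\,\d k\le 4\pi r^2/R$, both immediate from the pointwise bound (a weight vanishing at $k=0$ cannot control the infrared region; what controls it is that the dominating function $r^2/\abs k$ is integrable there). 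Second, ``uniform boundedness is clear'' only holds on the sectors $n\ge1$; the conditions in \cref{eq:Gset} do not constrain the vacuum component, so precompactness additionally requires a uniform bound on $\norm{\psi}$ --- this caveat is shared by the lemma as literally stated and is harmless, since the set is only applied to normalized ground states, but it should be made explicit at the start of your argument.
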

\begin{proof}
The elements of $\sG_r$ are localized by the first bound, and regular by the second. Conditions of this type are well-known to yield compactness, see \cite[Theorem~3.4]{HaslerHinrichsSiebert.2023} for a detailed proof.
\end{proof}
Now, we prove that $\overline{\sG_{r}}$ contains the ground states of $H_{\kappa,\Lambda}(P)$.
\begin{prop}\label{lem:gscom}
	If
	$\abs P<\sfc$
	, there exist $r>0$ (depending on $\abs P$, $\sfg$ and $\sfc$) such that for all $\kappa>0$ and $\Lambda\in\R_+\cup\{\infty\}$ and any normalized $\psi\in\cD(H_{\kappa,\Lambda}(P))$ with $H_{\kappa,\Lambda}(P)\psi = E_{\kappa,\Lambda}(P)\psi$, we have $\psi\in\overline{\sG_{r}}$.
\end{prop}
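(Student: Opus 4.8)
The plan is to adapt the compactness argument of \cite{GriesemerLiebLoss.2001}, in the massless form used in \cite{HaslerHinrichsSiebert.2023}: both bounds defining $\sG_r$ in \eqref{eq:Gset} will be read off from a pull-through formula, using that the artificial mass $\kappa>0$ opens a spectral gap above $E_{\kappa,\Lambda}(P)$ when $\abs P\le\sfc$ (\cref{cor:massiveexistence}). Fix $\abs P<\sfc$, $\kappa>0$, $\Lambda\in\R_+\cup\{\infty\}$ and a normalized $\psi$ with $H_{\kappa,\Lambda}(P)\psi=E_{\kappa,\Lambda}(P)\psi$. For $\Lambda<\infty$ the elementary pull-through formula $a_k(H_{\kappa,\Lambda}(P)+\mu)^{-1}=(H_{\kappa,\Lambda}(P-k)+\omega_\kappa(k)+\mu)^{-1}a_k$ (cf.\ \cite[Lem.~IV.8]{BachFroehlichSigal.1998b}) together with $[a_k,\ph(v_\Lambda)]=v_\Lambda(k)$ gives, on each $n$-particle sector,
\[
	a_k\psi=-\cR_k\,v_\Lambda(k)\,\psi,\qquad \cR_k:=\big(H_{\kappa,\Lambda}(P-k)+\omega_\kappa(k)-E_{\kappa,\Lambda}(P)\big)^{-1},
\]
for a.e.\ $k\in\IR^3$; for $\Lambda=\infty$ the analogous identity, with a renormalized right-hand side, follows by applying $a_k$ to the representation of $H_{\kappa,\infty}(P)$ in \cref{thm:ren} and pulling $a_k$ through the building blocks $\wt G_\infty,T_\infty,\wt T_\infty,R_\infty$, whose symbols only involve shifts of their arguments. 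The crucial quantitative input is a gap bound uniform in $\kappa$ and $\Lambda$: from $H_{\kappa,\Lambda}(P-k)\ge E_{\kappa,\Lambda}(P-k)\ge E_{\kappa,\Lambda}(P)-\abs k\abs P$ (\cref{lem:convex}) and the fact that $\omega(k)-\abs k\abs P=\abs k\big(\sqrt{\sfc^2+\xi^2\abs{k}^2}-\abs P\big)$ exceeds $(\sfc-\abs P)\abs k$ everywhere and $\tfrac\xi2\abs{k}^2$ once $\abs k\ge2\abs P/\xi$, there is $c_*=c_*(\abs P,\sfc,\xi)>0$ --- here $\abs P<\sfc$ is essential --- with $\cR_k^{-1}\ge\omega(k)-\abs k\abs P\ge c_*(\abs k\vee\abs{k}^2)$.

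The first bound in \eqref{eq:Gset} is then obtained at once: $\norm{a_k\psi}\le\norm{\cR_k}\,\abs{v_\Lambda(k)}\le\abs{v_\infty(k)}\big(c_*(\abs k\vee\abs{k}^2)\big)^{-1}$, and since $\abs{v_\infty(k)}=\sfg\,\abs{k}^{1/2}(\sfc^2+\xi^2\abs{k}^2)^{-1/4}$ is $\le\sfg\sfc^{-1/2}\abs{k}^{1/2}$ for $\abs k\le1$ and $\le\sfg\xi^{-1/2}$ for $\abs k\ge1$, one gets $\norm{a_k\psi}\le r\,(\abs{k}^{1/2}\vee\abs{k}^2)^{-1}$ with $r$ a suitable multiple of $\sfg/c_*$. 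For the regularity bound one writes, for $\abs p\le\tfrac12\abs k$ (displaying the case $\Lambda=\infty$; $\Lambda<\infty$ is analogous),
\[
	a_{k+p}\psi-a_k\psi=(\cR_k-\cR_{k+p})\,v_\infty(k+p)\,\psi+\cR_k\big(v_\infty(k)-v_\infty(k+p)\big)\psi .
\]
The second term is bounded using $\norm{\cR_k}$ and $\abs{v_\infty(k)-v_\infty(k+p)}\lesssim\sfg\abs p\,\abs{k}^{-1/2}(1+\abs k)^{-1/2}$ (valid since $\abs{k+tp}\simeq\abs k$ for $t\in[0,1]$), yielding $\lesssim\sfg\abs p\,\abs{k}^{-2}$. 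For the first term the second resolvent identity gives $\cR_k-\cR_{k+p}=\cR_k\big[p\cdot(P-k-\dG(\hat p))-\tfrac12\abs{p}^2+\omega(k)-\omega(k+p)\big]\cR_{k+p}$; the scalar pieces are handled as above (via $\abs{p}^2\le\tfrac12\abs k\abs p$ and $\abs{\omega(k)-\omega(k+p)}\lesssim(\sfc+\xi\abs k)\abs p$), while the term containing $\dG(\hat p)$ is estimated through $\abs p\,\norm{\cR_k}\,\norm{(P-k-\dG(\hat p))\cR_{k+p}}\,\abs{v_\infty(k+p)}$, the factor $\norm{(P-k-\dG(\hat p))\cR_{k+p}}$ being controlled by combining the gap bound with the form-boundedness of $(P-\dG(\hat p))$ relative to $H_{\kappa,\Lambda}(P)$, uniformly in $\kappa$ and $\Lambda$. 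Collecting all three contributions and distinguishing $\abs k\lessgtr1$ yields $\norm{(a_{k+p}-a_k)\psi}\le r\abs p\,\abs{k}^{-2}$ after enlarging $r$, which still depends only on $\abs P$, $\sfg$ and $\sfc$; hence $\psi\in\sG_r\subset\overline{\sG_r}$.

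The point I expect to require most care is the uniformity of all estimates in the ultraviolet cutoff $\Lambda$ (uniformity in $\kappa$ comes for free, since $H_{\kappa,\Lambda}(P)\ge H_{0,\Lambda}(P)$ and the gap bound does not degenerate as $\kappa\downarrow0$). Naive relative bounds of $\ph(v_\Lambda)$ and of the counterterms $\Sigma^{(1)}_\Lambda,\Sigma^{(2)}_\Lambda$ by $H_{\kappa,0}(P)$ diverge as $\Lambda\to\infty$, so both the form-boundedness of $(P-\dG(\hat p))$ by $H_{\kappa,\Lambda}(P)$ and the pull-through identity at $\Lambda=\infty$ have to be established through the renormalized objects and their $\Lambda$-uniform bounds from \cref{thm:ren}. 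In particular one has to control the genuinely non-local operator $\dG(\hat p)\wt G_\infty$ which, as pointed out in the remark after \cref{lem:core}, does not preserve $\cD(H_0(0))$ but does preserve the form domain of $H_{\kappa,0}(P)$; this is exactly where the present model is more delicate than the non-renormalized ones treated in the references cited above.
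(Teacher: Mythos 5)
For $\Lambda<\infty$ your argument is essentially the one in the paper: the pull-through identity $a_k\psi=-v_\Lambda(k)\bigl(H_{\kappa,\Lambda}(P-k)+\omega_\kappa(k)-E_{\kappa,\Lambda}(P)\bigr)^{-1}\psi$, the resolvent bound obtained from \cref{lem:convex} and $\omega(k)-\abs P\abs k\ge(\sfc-\abs P)\abs k$ resp.\ $\tfrac\xi2\abs k^2$, and the resolvent identity for $a_{k+p}\psi-a_k\psi$ (your version, which keeps the scalar terms $\tfrac12\abs p^2$ and $\omega(k+p)-\omega(k)$, is if anything slightly more complete than the displayed identity \cref{eq:pullthroughdiff}; the sign slip in your second resolvent identity is harmless). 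All of this is fine and uniform in $\kappa$ and $\Lambda$.

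The genuine gap is in your treatment of $\Lambda=\infty$. You assert that the pull-through formula "follows by applying $a_k$ to the representation of $H_{\kappa,\infty}(P)$ in \cref{thm:ren} and pulling $a_k$ through the building blocks $\wt G_\infty,T_\infty,\wt T_\infty,R_\infty$", but this is precisely the hard step, and it is not carried out: $H_{\kappa,\infty}(P)$ is \emph{not} of the form $H_{\kappa,0}(P)+\ph(v_\infty)$, $v_\infty\notin L^2(\R^3)$, so the commutator computation $[a_k,\ph(v_\Lambda)]=v_\Lambda(k)$ has no direct analogue on $\cD(H_{\kappa,\infty}(P))$, and commuting $a_k$ through $(1-\wt G_\infty^*)(H_{\kappa,0}(P)+T_\infty+\mu)(1-\wt G_\infty)+\wt T_\infty+R_\infty$ produces a collection of shifted kernels whose resummation into $-v_\infty(k)\cR_k\psi$ would require a separate (nontrivial) argument. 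The same applies to the uniform-in-$\Lambda$ form bound of $P-\dG(\hat p)$ relative to $H_{\kappa,\Lambda}(P)$ that you invoke for the $\dG(\hat p)$ term: as you yourself note, naive bounds diverge as $\Lambda\to\infty$, and you only flag this as "the point requiring most care" without closing it. The paper avoids both issues entirely: it establishes the bounds only for $\Lambda<\infty$, with $r$ uniform in $\kappa$ and $\Lambda$, and then uses that the conclusion only asks for membership in the \emph{closure} $\overline{\sG_r}$. By norm resolvent convergence (\cref{thm:ren}) together with the $\Lambda$-uniform spectral gap of \cref{cor:massiveexistence}, the spectral projections of $H_{\kappa,\Lambda}(P)$ converge to those of $H_{\kappa,\infty}(P)$, so any ground state of $H_{\kappa,\infty}(P)$ is a norm limit of finite-cutoff ground states lying in $\sG_r$, hence lies in $\overline{\sG_r}$. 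If you replace your infinite-cutoff pull-through by this approximation step (and note that it is exactly why the statement is phrased with $\overline{\sG_r}$), your proof becomes complete without any analysis of $a_k$ on the renormalized domain.
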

\begin{proof}
	Throughout this proof, $r>0$ denotes a (not necessarily fixed) constant solely depending on $\abs P$, $\sfg$ and $\sfc$. Especially, $r$ is independent of $\kappa$ or $\Lambda$.
	Further, fix $\kappa>0$, $\Lambda<\infty$ and $\psi$ as in the statement.

	The starting point of our proof is the the pull-through formula
	\begin{equation}\label{eq:pullthrough}
		a_k \psi = -v_\Lambda(k)R_{\kappa,\Lambda}(P,k)\psi \qquad \mbox{with}\quad R_{\kappa,\Lambda}(P,k)\coloneqq \left(H_{\kappa,\Lambda}(P-k) - E_{\kappa,\Lambda}(P)+\omega (k) + \kappa \right)^{-1},
	\end{equation}
	which holds true for almost every $k\in\IR^3$.
	To check this,
	compute using the commutation relations
	\begin{equation}
	 0=a_k(H_{\kappa, \Lambda}(P)- E_{\kappa,\Lambda}(P))\psi = v_\Lambda(k)\psi + (H_{\kappa, \Lambda}(P-k)+\omega(k) + \kappa -  E_{\kappa,\Lambda}(P))a_k\psi.
	\end{equation}
The formula then follows by applying $R_{\kappa,\Lambda}(P,k)$, which is well defined since $E_{\kappa,\Lambda}(P-k)\ge E_{\kappa,\Lambda}(P) -\abs k \abs P  \ge E_{\kappa,\Lambda}(P)  - \omega(k)$, by \cref{lem:convex} and the assumption $\abs P\le\sfc$, see e.g. \cite{Gerard.2000,Dam.2018} for more details.

To estimate the resolvent, we use Lemma~\ref{lem:convex} to obtain the bounds
\begin{equation}
 E_{\kappa,\Lambda}(P-k)-E_{\kappa,\Lambda}(P) + \omega(k) \ge \omega(k)-\abs P\abs k \ge
 \begin{cases}
       (\sfc-|P|)|k| &  \mbox{for all}\ k\in\IR^3,\\
			\frac{\xi}{2}|k|^2 & \mbox{if}\ \abs k>2 \xi^{-1}\abs P.                                                                                                                                                                          \end{cases}
\end{equation}
Then, using that $H_{\kappa,\Lambda}(P-k)\ge E_{\kappa, \Lambda}(P-k)$, it follows directly from the spectral theorem that
\begin{equation}\label{eq:resbound}
	\norm{R_{\kappa,\Lambda}(P,k)}\le \begin{cases} ((\sfc-\abs P)\abs k)^{-1} & \mbox{for all}\ k\in\IR^3,\\
	2 \xi^{-1} \abs k^{-2} & \mbox{if} \ \abs k>2 \xi^{-1}\abs P.  \end{cases}
\end{equation}
Further employing
 $|v(k)|\le \sfg (\sqrt{\abs{k}/\sfc}\wedge 1)$
we find, for an appropriate choice of $r$,
\begin{equation}\label{eq:vresbound}
	\|v_\Lambda(k)R_{\kappa,\Lambda}(P,k)\| \le \frac{r}{\sqrt{\abs k}\vee {\abs k^2}},
\end{equation}
which combined with \cref{eq:pullthrough} proves the desired bound on $a_k \psi$ in the definition \cref{eq:Gset}.

	Now let $\abs p \le \frac 12 \abs k$.
	The resolvent identity gives
	\begin{equation}\label{eq:pullthroughdiff}
		\begin{aligned}
					a_{k+p}\psi - a_k\psi  =  & \left(v_\Lambda(k+p)-v_\Lambda(k)\right)R_{\kappa,\Lambda}(P,k+p)\psi\\
			& + v_\Lambda(k)R_{\kappa,\Lambda}(P,k+p) [p \cdot (P-k-\dG(\hat p))] R_{\kappa,\Lambda}(P,k)\psi.
		\end{aligned}
	\end{equation}
	 Since $|\nabla v_\Lambda(\ell)|\le C\omega^{-1/2}(k)$ for $\tfrac12|k|\le \ell \le \tfrac32 |k|$, cf. \cref{def:v}, using \cref{eq:resbound} yields
	\begin{equation}\label{eq:diffest11}
	 \|\left(v_\Lambda(k+p)-v_\Lambda(k)\right)R_{\kappa,\Lambda}(P,k+p)\psi\|\le  \frac{8C|p|}{(\sfc-\abs P)|k|\sqrt{\omega(k)}} \le  \frac{r\abs p}{\abs k^2}.
	\end{equation}
As $\norm{ \abs{P-k-\dG(\hat p)}  R_{\kappa,\Lambda}(P,k)} \le 1$, \cref{eq:vresbound} also implies
\begin{equation}\label{eq:diffest2}
 \|v_\Lambda(k)R_{\kappa,\Lambda}(P,k+p) [p \cdot (P-k-\dG(\hat p))] R_{\kappa,\Lambda}(P,k)\psi\|\le \frac{r |p|}{|k|^2}.
\end{equation}
Combining \cref{eq:pullthroughdiff,,eq:diffest11,,eq:diffest2} and
	the definition \cref{eq:Gset}, this proves that $\psi\in \sG_r$ in the case $\Lambda<\infty$.
	
	As a consequence of norm-resolvent convergence  and the uniform gap estimate in \cref{cor:massiveexistence}, the spectral projections of $H_{\kappa, \Lambda}(P)$ converge to those of $H_{\kappa, \infty}(P)$ (cf. \cite[Theorem~VIII.23]{ReedSimon.1972}), whence the ground states of $H_{\kappa,\infty}$ are contained in the closure $\overline{\sG_r}$.
\end{proof}
We conclude with the proof of our main result.
\begin{proof}[{\textbf{Proof of \cref{thm:gsex}}}]
	Let $\psi_{\kappa}$ denote any normalized ground state of $H_{\kappa,\infty}(P)$ for $\kappa>0$.
	Since $\cD(H_{\kappa,\infty}(P))\subset \cD(H(P))$, cf. \cref{lem:core}, we find
	\begin{equation}\label{eq:minimizing}
		0 \le \braket{ \psi_{\kappa},(H(P)-E(P)) \psi_{\kappa} } \le  \braket{\psi_{\kappa},(H_{\kappa,\infty}(P)-E(P))\psi_{\kappa}} = E_{\kappa,\infty}(P)-E_{0,\infty}(P) \xrightarrow{\kappa\downarrow0}0,
	\end{equation}
	by \cref{lem:gsconv}.
	Further, since $(\psi_{\kappa})_{\kappa>0}\subset \overline{\sG_r}$ by \cref{lem:gscom}, which is compact by \cref{lem:compact}, there exists a  zero sequence $(\kappa_n)_{n\in\IN}$ such that the  limit $\psi_{\infty} =\lim_{n\to\infty} \psi_{\kappa_n}$ exists.
	The estimate \cref{eq:minimizing} then
		implies that $\psi_\infty\in\cD(H(P)^{1/2})$, whence $\psi_\infty$ is a minimizer of the closed quadratic form of $H(P)$, and thus an eigenvector, which finishes the proof.
\end{proof}


\section*{Acknowledgements}
BH acknowledges funding by the Ministry of Culture and Science of the State of North Rhine-Westphalia within the project `PhoQC'.
JL thanks Christian Hainzl for discussions on the subject and acknowledges financial support through the EUR-EIPHI Graduate School (ANR-17-EURE-0002).


\bibliographystyle{halpha-abbrv}
\bibliography{00lit}

\end{document}